\newtheorem{theorem}{Theorem}[section]
\newtheorem{definition}{Definition}[section]
\newtheorem{lemma}[theorem]{Lemma}
\begin{document}
\title{Treespilation: Architecture- and State-Optimised Fermion-to-Qubit Mappings}

\author{Aaron Miller}
\email{aaron.miller@algorithmiq.fi}
\affiliation{Algorithmiq Ltd, Kanavakatu 3C 00160 Helsinki, Finland}
\affiliation{Trinity Quantum Alliance, Unit 16, Trinity Technology and Enterprise Centre, Pearse Street, Dublin 2, Ireland}
\author{Adam Glos}
\affiliation{Algorithmiq Ltd, Kanavakatu 3C 00160 Helsinki, Finland}
\author{Zolt\'an Zimbor\'as}
\affiliation{Algorithmiq Ltd, Kanavakatu 3C 00160 Helsinki, Finland}
\affiliation{HUNREN Wigner Research Centre for Physics, Budapest, Hungary}

\date{\today}
\begin{abstract}
Quantum computers hold great promise for efficiently simulating Fermionic systems, benefiting fields like quantum chemistry and materials science.
To achieve this, algorithms typically begin by choosing a Fermion-to-qubit mapping to encode the Fermionic problem in the qubits of a quantum computer.
In this work, we introduce "treespilation," a technique for efficiently mapping Fermionic systems using a large family of favourable tree-based mappings previously introduced by some of the authors.
We use this technique to minimise the number of CNOT gates required to simulate chemical groundstates found numerically using the ADAPT-VQE algorithm.
We observe significant reductions, up to $74\%$, in CNOT counts on full connectivity and for limited qubit connectivity-type devices such as IBM Eagle and Google Sycamore, we observe similar reductions in CNOT counts. 
In many instances, the reductions achieved on these limited connectivity devices even surpass the initial full connectivity CNOT count.
Additionally, we find our method improves the CNOT and parameter efficiency of QEB- and qubit-ADAPT-VQE, which are, to our knowledge, the most CNOT-efficient VQE protocols for molecular state preparation.
\end{abstract}

\maketitle

\section{Introduction}
Quantum computing has made significant strides in the past decade.
However, achieving fault-tolerant quantum computing remains a challenging goal. 
Current quantum devices have limitations such as a small number of qubits, restricted qubit connectivity, and error-prone gates, making it difficult to execute deep circuits required for paradigmatic quantum algorithms \cite{preskill2018quantum}. 
Nevertheless, recent experiments have demonstrated the potential of today's quantum devices, and have shown success in solving complex problems \cite{IBMexperiment, googleexperiment, rosenberg2023dynamics, Nathansim}. 
This potential offers valuable computational resources, particularly when combined with classical computing \cite{arute2019quantum, zhong2020quantum, wu2021strong, madsen2022quantum}, especially in mitigating the detrimental effects of noise \cite{filippov2023scalable, endo2018practical, endo2021hybrid, cai2022quantum, berg2022probabilistic}.

Among the diverse applications of quantum computing, simulating many-body Fermionic quantum systems with quantum devices presents an intriguing prospect, especially in computational chemistry \cite{kandala2017hardware,barkoutsos2018quantum,ollitrault2020quantum}. 
This potential transformation extends to fields like material science \cite{lordi2021advances} and drug discovery \cite{cao2018potential,blunt2022perspective,maniscalco2022quantum}, among others.
Various approaches exist for addressing these quantum chemical problems on quantum computers \cite{fitzpatrick2022selfconsistent, kirby_exact_2023, nykanen_toward_2023}, with many utilizing the physical qubit state of the quantum device to represent the desired many-body Fermionic system.
Properties of the system are then inferred through measurements of the qubit state \cite{mcardle2020quantum, tilly2022variational}.
One such approach is the Variational Quantum Eigensolver (VQE) \cite{cerezo_variational_2022}, which approximates the qubit representation of a target Fermionic state, such as the ground state of a Fermionic Hamiltonian. 
The algorithm begins by deriving a qubit Hamiltonian from the desired Fermionic Hamiltonian using a Fermion-to-qubit (F2Q) mapping. 
Next, a parameterized quantum circuit, known as an ansatz, is designed. 
Finally, the circuit parameters are optimized using a classical optimizer, to minimize the energy of the current quantum state for the qubit Hamiltonian.

\begin{figure*}[ht!]
   \centering
   \includegraphics[width=2\columnwidth]{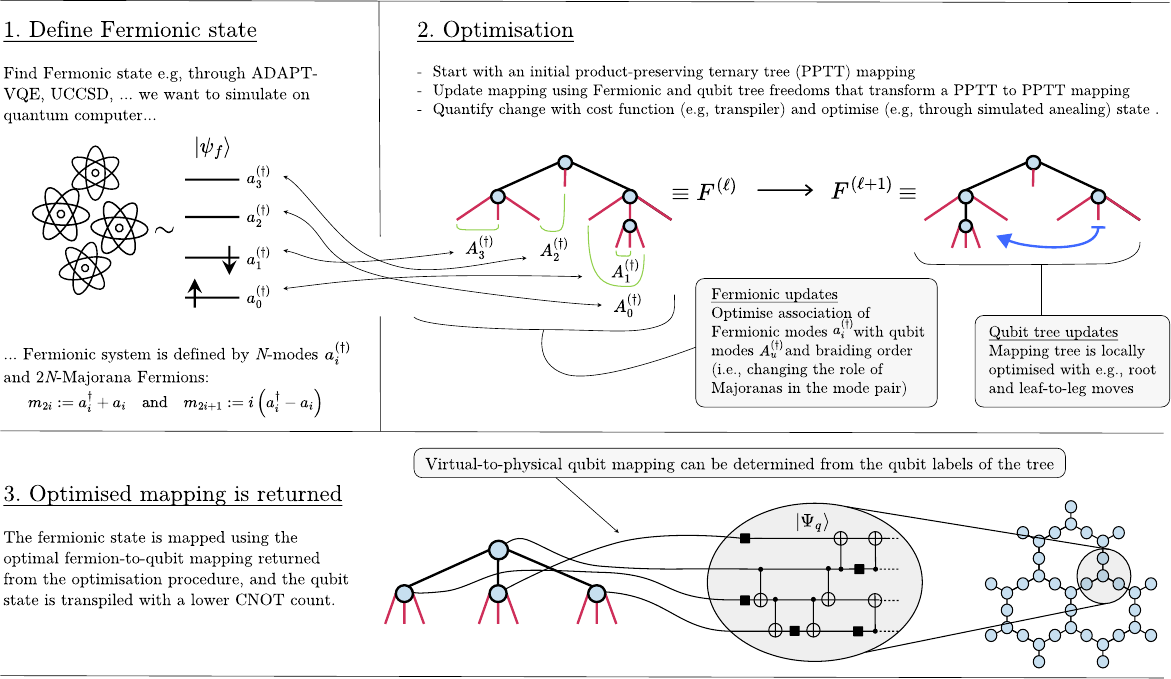}
\caption{Infographic of the treespilation algorithm.}
   \label{fig:f2q opt procedure}
\end{figure*}

For near-term quantum devices, circuit noise robustness is crucial. VQE offers the potential to discover such circuits, characterized by a reduced presence of noisy entangling two-qubit CNOT gates compared to far-term approaches like qubitisation \cite{qubitisation}.
The reduction of these gates is important as they take longer and have lower fidelities compared to single-qubit gates, contributing to computation errors \cite{gateerrors}.
One promising variant of VQE is the Adaptive Derivative-Assembled Pseudo-Trotter (ADAPT) VQE algorithm. 
It starts with a reference state, like the Hartree-Fock state, and sequentially adds elements from a predefined candidate gate set, known as a pool, to optimize for the target state \cite{adapt-vqe-orig-paper}. 
The choice of the operator pool significantly affects the convergence and circuit cost in qubit space.
Often, pools originating from Fermionic systems are chosen to produce such circuits \cite{adapt-vqe-orig-paper,yordanov2021qubit,tang2021qubit}.
The \textit{Fermionic} pool \cite{adapt-vqe-orig-paper} consisting of single- and double-excitation operations present in the Unitary Coupled Cluster Singles and Doubles (UCCSD) ansatz \cite{cao_quantum_2019}.
However, mapping Fermionic operators to qubits can result in highly non-local operations incurred from mapping indistinguishable Fermions to distinguishable qubits.
To address this challenge, the Qubit-Excitation-Based (QEB) pool was introduced, which modifies elements of the Fermionic pool to disregard Fermionic antisymmetry. 
This enables efficient implementation with a fixed number of CNOT gates for full connectivity, making it a leading method for state preparation \cite{yordanov2021qubit,burton_exact_2023,feniou_overlap-adapt-vqe_2023}. 
Another approach, the qubit-pool, reduces CNOT gate requirements further by splitting QEB pool elements into individual 4-local Pauli strings \cite{tang2021qubit}.
The unitaries in the non-Fermionic pools do not have a straightforward representation in Fermionic space.
Although a representation does exist, we refer to them as non-Fermionic pools.
Additionally, some entangler-circuit approaches aim to minimize gate count by avoiding Fermionic operations altogether \cite{PhysicsConstrainedHardwareEfficientAnsatz}.

With most approaches to solving Fermionic problems on quantum computers, a Fermion-to-qubit mapping is selected.
The mapping encodes a many-mode Fermionic Hamiltonian and target state, $\ket{\Psi_\mathrm{f}}$, as a multi-qubit Hamiltonian of Pauli operators and state, $\ket{\Psi_\mathrm{q}}$. 
The choice of mapping is not unique, and different mappings result in different qubit states with varying challenges in preparation of $\ket{\Psi_\mathrm{q}}$ on the quantum device. 
Moreover, the interest lies not only in simulating $\ket{\Psi_\mathrm{f}}$ but also in determining physical properties, $\bra{\Psi_\mathrm{f}}\mathcal{O}_\mathrm{f}\ket{\Psi_\mathrm{f}}$, of certain Fermionic observable operators $\mathcal{O}_\mathrm{f}$. 
The chosen Fermion-to-qubit mapping maps $\mathcal{O}_\mathrm{f}$ to its qubit counterpart, $\mathcal{O}_\mathrm{q}$, and the evaluation of $\bra{\Psi_\mathrm{q}} \mathcal{O}_\mathrm{q} \ket{\Psi_\mathrm{q}}$ involves measurements on a physical quantum device, incurring measurement costs depending on the chosen mapping \cite{garcia2021learning,glos2022adaptive}. 

A significant obstacle in implementing these Fermionic operations is the connectivity of the quantum device we use to simulate the state.
Limited connectivity devices, such as those based on superconducting qubits, can incur large circuit overheads when compared to full connectivity due to the necessity of SWAP gates needed to transpile the circuit to the device and due to the non-local nature of the mapped Fermionic operations. 
To address this issue, \cite{miller2022bonsai} introduces a versatile class of mappings and presents the Bonsai algorithm. 
This algorithm tailors the Fermion-to-qubit mapping to the device, reducing SWAP gate overhead by aligning the mapping's tree structure with the qubit connectivity graph.
Subsequent research has built on this approach by employing the framework to encode double excitations within two-qubit subspaces to simplify the entanglement structure and lower the computational cost of VQEs and tensor-network simulations for chemical systems \cite{parella-dilme_reducing_2023}.

The significance of Fermion-to-qubit mappings has thus fueled extensive research toward designing mappings beyond the traditional Jordan-Wigner (JW)  transformation~\cite{jordan_uber_1928}.
Many efforts are directed towards lowering Pauli weight, that is the number of qubits that the encoded Fermionic operators act on, from the $\mathcal{O}(N)$ scaling of JW to more favourable $\mathcal{O}(\log N)$ scaling of Bravyi-Kitaev \cite{bravyi2002Fermionic} where $N$ is the number of modes simulated.  
Certain mappings have succeeded in reducing the number of qubits from the $N$-qubits required to simulate $N$-modes usually \cite{Bravyi2017}. 
A substantial body of work has concentrated on reducing both these Pauli and qubit requirements in lattice models \cite{Chiew2022, Setia2018, Steudtner2018, Whitfield, Chien2020, Steudtner2019}. 
Others have optimized measurement costs by introducing mappings with provably optimal Pauli weight \cite{jiang2020optimal}.
Recently, the connection between ternary trees and mappings has been explored \cite{vlasov2019clifford}. 

Additional work involves the study of custom encodings to reduce circuit overhead in the context of VQE, as highlighted in \cite{wang2023optimized}. 
In \cite{ChienPHASECRAFT2022}, a general scheme that employs a brute force search over the space of encodings mapping from Majorana monomials to Pauli operators is explored. 
These mappings are also optimized for limited qubit connectivity settings, with resulting encodings providing fairly general optimality guarantees on solutions. 
However, due to the high computational cost of the brute force method, only small systems are feasible with a focus on symmetric lattice models.
In \cite{Chiew2022}, the enumeration scheme between Fermionic modes and qubit operators representing said modes is explored to minimize various simulation costs with the JW encoding. 

When simulating a Fermionic state we are free to choose the mapping of the Fermionic-based operations comprising said state.
In our paper, we introduce ``treespilation", a technique that leverages this understanding and extends the Bonsai algorithm \cite{miller2022bonsai}.  
The algorithm optimises a mapping of $\ket{\psi _f}$ to prepare $\ket{\Psi_q}$ with a low CNOT count.
To illustrate this approach, we optimize the encodings for numerically produced ADAPT-VQE ansatz, using Fermionic and introduced Majoranic pools on setups with full and limited connectivity.
Comparing our approach to the non-Fermionic QEB and qubit pools, we observe that across the molecules considered, our method significantly outperforms these state-of-the-art approaches on setups with limited connectivity. 
When considering setups with full connectivity, on average, our approach shows improvement over using QEB and qubit pools, challenging the benefits of employing these widely used non-Fermionic operations in state preparation.
Of the methods considered, we find the Majoranic pool combined with treespilation to be by far the most CNOT-efficient pool for state preparation on limited connectivity hardware. 
Specifically, we observe that the limited connectivity overhead is eliminated, and in certain cases, the CNOT count is reduced compared to the full connectivity ansatz in the JW encoding. 
Figure \ref{fig:f2q opt procedure} illustrates the method.

\section{Preliminaries}
\subsection{Fermionic systems \label{sec:Fermionic systems}}
An $N$-mode Fermionic system in second quantization can be described in terms of $N$ creation operators $\{ a_i^\dagger \}_{i = 0}^{N-1}$ and annihilation operators $\{ a_i \}_{i = 0}^{N-1}$ that satisfy the canonical anticommutation relations:
\begin{gather}
    \label{eq:Fermionic_anticommutation}
    \{ a_i, a_j \} = \{ a_i^\dagger, a_j^\dagger \} = 0, \\ \{ a_i^\dagger, a_j \} = \delta_{ij} \mathbbm{1}.
\end{gather}
Mathematically, the $N$-mode Fermionic system is equivalent to the $N$-dimensional Fock space $\mathcal{F}(\mathbb{C}^N)$, a $2^N$-dimensional Hilbert space spanned by the so-called Fock basis. The operators defined above allow us to define the basis as follows. First, the Fermionic vacuum $\ket{\text{vac}_\text{f}}$ is defined to be the unique vector such that $a_j \ket{\text{vac}_\text{f}} = 0$ for all $j = 0, \dots, N-1$. The remaining basis elements can be constructed by considering all possible combinations of occupation numbers $n_j \in \{0, 1\}$:
\begin{align}
    \ket{n_0 n_1 \dots n_{N{-}1}} \coloneqq \prod_{j=0}^{N-1} (a_j^\dagger)^{n_j} \ket{\text{vac}_\text{f}}.
\end{align}

Creation and annihilation operators are not the only operators that can define the Fermionic space. It is also common to define an equivalent set of so-called Majorana operators $\{ m_k \}_{k = 0}^{2N-1}$ as
\begin{gather}
    \label{eq:Fermion_to_Majorana}
    m_{2 j} \coloneqq a_{j}^\dagger + a_{j},\\ m_{2 j+1} \coloneqq  \mathrm{i} (a_{j}^\dagger - a_{j}).
\end{gather}
Such operators obey many useful properties, such as being unitary and self-adjoint. Additionally, one can show that they obey the anticommutation relation $\{m_{i},m_{j}\} = 2 \delta_{ij} \mathbbm{1}$.

The above ways of defining Fermionic systems allow us to provide two equivalent forms of an $N$-mode second-quantized Fermionic Hamiltonian:
\begin{equation}
\begin{split}
    \mathcal{H}_f &= \sum_{ij}^{N-1} h_{ij} a_i^{\dagger} {a_j} + \sum_{ijkl}^{N-1} h_{ijkl} a_i^{\dagger} a_j^{\dagger} {a_k} {a_l} \\
    &=\sum_{ij}^{2N-1}\mathrm{i} c_{ij} {m_i} {m_j} + \sum_{ijkl}^{2N-1} c_{ijkl} \, {m_i} {m_j} {m_k} {m_l}.
\end{split}
\end{equation}
for coefficients $h_{ij}$, $h_{ijkl}$,  $c_{ij}$ and $c_{ijkl}$.
The equivalence between these forms comes directly from the linear dependency presented in Eq.~\eqref{eq:Fermion_to_Majorana}. %Note that the latter form is a linear combination of products of 0, 2, or 4 Majoranas due to the identity, $m_i^2 = \mathbbm 1$.

An important operation we consider in this paper is the Majorana braiding transformation $U_{jk}$ \cite{majoranas}. 
This unitary swaps the roles of the $k$'th and $j$'th Majorana modes (up to a sign), leaving other Majoranas unchanged.
From consideration of the Fermionic parity \cite{Whymajoranasarecool}, the unitary can be expressed as the Clifford operator:
\begin{equation}
    \label{eq: braiding operator}
    U_{jk} = \exp{\frac{\pi}{4}{m}_j{m}_k} = \frac{1}{\sqrt{2}}\left(  \mathbbm{1} + {m}_j {m}_k \right),
\end{equation}
with the Majoranas transforming as:
\begin{equation}
    \begin{aligned}
        & {m}_j \rightarrow U_{jk} {m}_j U_{jk}^\dagger = -m_k\\
        & {m}_k \rightarrow U_{jk} {m}_k  U_{jk}^\dagger =m_j.
    \end{aligned}
\end{equation}
Considering $U_{2j,2j+1}$, we see it exchanges the role of Majoranas within Fermionic mode-$j$, that is ${m}_{2j} \rightarrow - {m}_{2j+1}$ and ${m}_{2j+1} \rightarrow {m}_{2j}$.
with creation and annihilation operators transforming as,
\begin{equation}
\label{eq: ferm braid}
    \begin{aligned}
        & a_j \rightarrow -\mathrm{i}a_j, \\
        & a_j^\dagger \rightarrow \mathrm{i}a_j^\dagger.
    \end{aligned}
\end{equation}
Under this particular transformation, Fock basis states are mapped to the same state with a phase shift, i.e., $\ket{n_0 n_1 \dots n_{N-1}}\rightarrow \mathrm{i}^{ n_j}\ket{n_0 n_1 \dots n_{N-1}}$. 
Importantly, the vacuum state is invariant for this transformation. 
For arbitrary exchanges, this is not the case, as we see when we consider swapping $m_{1}$ and $m_{2}$ that partially constitute modes $0$ and $1$. 
In this case, the vacuum state $\ket{00}$ is transformed into a nontrivial linear combination of the original Fock states
\begin{equation}
    \ket{00} \rightarrow U_{12}\ket{00}=\frac{1}{\sqrt{2}}(\ket{00}-\mathrm{i}\ket{11}),
\end{equation}

These features, wherein the Fock basis states are mapped to basis states and vacuum state preservation, hold significance for subsequent sections in which our objective is to perform Fermion-to-qubit transformations that maintain these features. Specifically, we aim to map Fock product states to qubit states while encoding the Fermionic vacuum state as the all-zero qubit state.

\subsection{Fermion-to-Qubit mappings}
The Fermionic Fock space, $\mathcal{F}\left(\mathbb{C}^N\right)$, and the Hilbert space of $N$ qubits $\left(\mathbb{C}^2\right)^{\otimes N}$ are both $2^N$-dimensional Hilbert spaces; thus, we can unitarily map between them. 
A natural unitary mapping is to map Fock basis states $\mathcal{F}\left(\mathbb{C}^N\right)$ to computational basis states of the qubits such that the occupation number of the $j$'th Fermionic mode matches with the state of the $j$'th qubit \cite{jordan_uber_1928}:
\begin{equation}
    \mathcal{F}\left(\mathbb{C}^N\right) \ni \left|n_0n_1 \ldots n_{N-1}\right\rangle \mapsto \bigotimes_{i=0}^{N-1}\left|n_i\right\rangle \in \left(\mathbb{C}^2\right)^{\otimes N}.
\end{equation}
This mapping is known as the Jordan-Wigner (JW) transformation, and under it, the basis of Majoranas is mapped to qubit space as:
\begin{equation}
    \label{eq:jw_majoranas}
    m_{2 j} \mapsto X_j \prod_{k=0}^{j-1} Z_k \quad \text{and} \quad m_{2 j+ 1} \mapsto Y_j \prod_{k=0}^{j-1} Z_k,
\end{equation}
for $j=0,1, \ldots N-1$. 
Here and in the rest of the paper, we denote $P_j$ with $P \in \{X, Y, Z\}$ for an operator that acts as the Pauli operator $P$ on the $j$'th qubit and as the identity on all others. 

The JW mapping is part of the class of the so-called \textit{Majorana string} Fermion-to-qubit mappings, as it associates a Majorana operator with a single Pauli string $S_i$ while preserving the commutation relations:
\begin{equation}
\begin{aligned} 
    & \{{m}_i, {m}_j\}  = 2\delta_{ij} \mathbbm{1} \rightarrow \{S_{i}, S_{j}\}  = 2\delta_{ij} \mathbbm{ 1}
\end{aligned}
\end{equation}
for $i,j \in \{0,\ldots,2N-1\}$.
This identification is implicitly used in other canonical mappings \cite{bravyi2002Fermionic, jiang2020optimal, jordan_uber_1928, parity}, and we refer to the associated Pauli operators $S_i$ as \textit{Majorana strings}. 
To complete the mapping, these Majorana strings are paired into qubit mode operators $A_i$ and $A_i^\dagger$. 
Then, the qubit vacuum state $\ket{\text{vac}}_\text{q}$ is found by solving $A_j\ket{\text{vac}}_\text{q}$ for $j \in 1 \dots N$. Note that many mappings exist outside this class \cite{setia_superfast_2019,bravyi2002Fermionic, kirby_second-quantized_2022}; however, this class proves particularly useful for quantum chemistry.

\subsection{Ternary tree based Fermion-to-qubit mappings}
We now present a useful class of Majorana-string mappings introduced in \cite{miller2022bonsai}, the product-preserving ternary-tree (PPTT) based Fermion-to-qubit mappings. 
These mappings possess the crucial property of product preservation, transforming Fock basis product states into qubit computational basis product states, i.e., $\ket{n_0 n_1 \dots n_{N-1}} \rightarrow \ket{x_0 x_1 \dots x_{N-1}}$ for $x_i \in \{0, 1\}$. 
This ensures the separability in qubit space for states such as the Hartree-Fock.
Another notable feature is that the Fermionic vacuum is transformed to the all-zero qubit state, i.e. $\ket{\text{vac}}_\text{f}\rightarrow \ket{0\dots0}$. 
This is particularly significant, as these states are the initial starting points for numerous quantum chemical algorithms, including UCCSD and ADAPT-VQE, and can thus be prepared without entangling gates.
Additionally, the authors establish a connection between the tree structure of the mapping and the encoding of Fermionic mode occupancy information in qubits. 
The methodology encompasses well-known mappings such as the Jordan-Wigner~\cite{jordan_uber_1928}, Bravyi-Kitaev \cite{bravyi2002Fermionic}, Ternary Tree \cite{jiang2020optimal}, and Parity \cite{Bravyi2017} encodings. 
Furthermore, they introduce the Bonsai algorithm, which leverages the flexibility of this methodology to design mappings that reduce SWAP gate requirements by aligning the generating ternary tree of the Fermion-to-qubit mapping with the limited connectivity of the quantum device.

\begin{figure}[t]
   \centering
   \includegraphics[width=0.95\columnwidth]{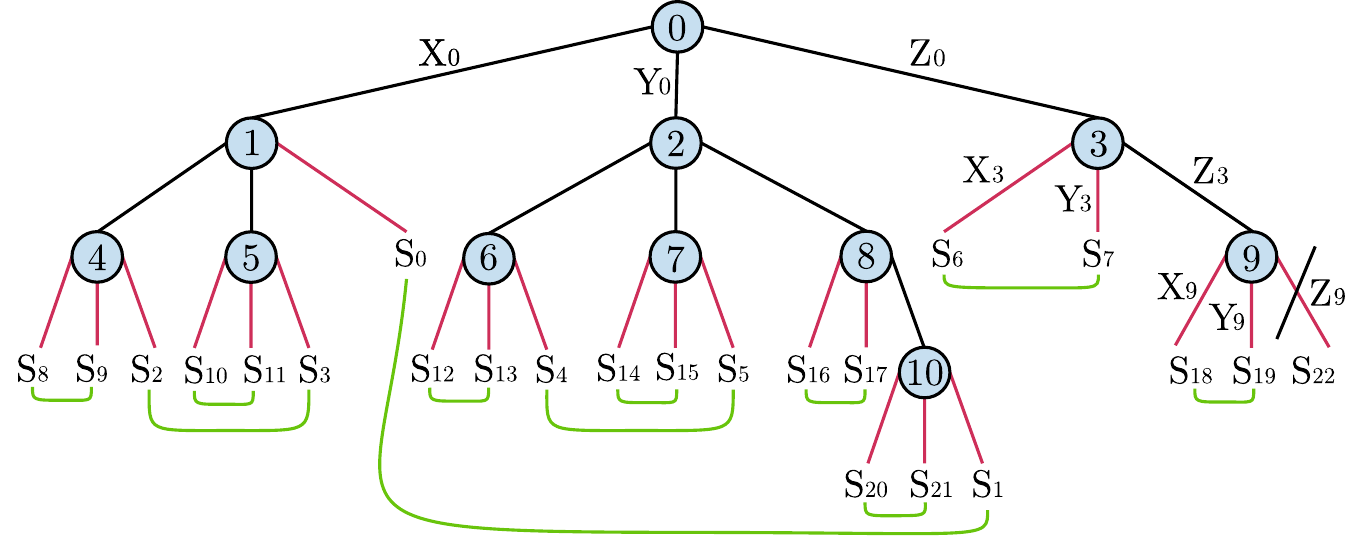}
   \caption{Example of a mapping derived from a ternary tree. 
   The enumeration of the vertices corresponds to the qubits while the black lines represent edges connecting two qubits and the red lines represent legs which we label with a Majorana string $S_u$.
   We represent the Pauli labelling on the links (both edges and legs) based on their position: the leftmost, central, and rightmost links below node-$u$ are labelled with $X_u$, $Y_u$ and $Z_u$ respectively.
   This labelling is explicit along the rightmost path (0-3-9).
   Each leg in the tree is associated with a Pauli string by following the path from the root node (node-0, in this case) to the leg. 
   The strings are generated as follows: each time a link with label $P$ stemming downward from a qubit-$u$ is crossed, the Pauli operator $P$ acting on qubit-$u$ is added to the string. 
   The resulting string acts trivially on all qubits not visited along the path, while differs by only a single Pauli from any other string. 
   To define the mapping, we remove the rightmost all-$Z$ operator ($S_{22}=Z_0Z_3Z_9$ in this case), and the remaining strings are paired into qubit modes according to the pairing algorithm outlined in \cite{miller2022bonsai} and in doing so we associate the Pauli strings $S_u$ with Majorana operators $m_i$. 
   Pairings are represented by the green lines in the figure. 
   This process ensures the separability of the Fermionic product states in qubit space and maps the Fermionic vacuum to the zero qubit state.
   For example, the leg $S_{18}$ corresponds to the string $S_{18} = Z_0Z_3X_9$, while $S_{19} = Z_0Z_3Y_9$.
   These strings are paired to represent the $i$th-mode, $a_i=\rightarrow\tfrac{1}{2}(X_9 + iY_9)Z_0Z_3$. 
   For clarity purposes, we omit the mode index and braiding flag from the node label.}
   \label{fig:example_tree}

\end{figure}
A PPTT mapping is uniquely defined by a labelled ordered TT. 
The ordered TT is a directed graph and a tree with $N$ nodes such that there is a unique node with indegree 0, and all the nodes point to at most 3 other nodes.
These nodes are usually labelled as left, middle, and right child, but for convenience, we will label them with $X$, $Y$, and $Z$, respectively. 

Given such a labelled ordered Ternary Tree (TT), one can uniquely construct a PPTT Fermion-to-qubit mapping as follows.
First, following the procedure presented in \cite{miller2022bonsai}, one can generate a basis of pairs of anticommuting Pauli strings from the tree that are later connected to particular Majoranas to complete the mapping.
The procedure starts by adding ``legs" to a ternary tree, which are labels associated with each vertex so that the total number of outward edges plus legs is equal to three for each node. 
Then, Pauli-$X$, Pauli-$Y$, or Pauli-$Z$ labels are assigned to each of the legs of the tree, so that each node has three edges or legs stemming from it with each of these labels. 
For such a tree with $N$-nodes, there are $(2N+1)$-paths from the root to the legs. 
Pauli operators are associated with each of these paths by taking the tensor product of each Pauli label along the path where the Pauli acts on the qubit-$u$.
All the Paulis generated this way pairwise anticommute, as they exhibit nontrivial differences at a single qubit, that is, there is one qubit at which they simultaneously have different a Pauli assigned such that neither is identity. 
This procedure generates $(2N+1)$-anticommuting Pauli strings that are to be associated with $2N$-Majorana operators.
This is achieved by removing the Pauli string that consists of only $Z$-Paulis, leaving us with $2N$-strings. 
Then we apply the pairing algorithm outlined in \cite{miller2022bonsai} to form $N$-Pauli operator pairs called \textit{qubit modes}. 
In doing so, we connect each string with a Majorana.
Each qubit mode is associated with the qubit in the TT at which the corresponding Majorana string pair exhibits a nontrivial difference.

A bijection is then established between the $N$-qubit modes and the $N$-Fermionic modes of our system. 
Any association ensures the PP property, as proven in \cite{miller2022bonsai}. 
Moreover, as demonstrated in Sec.~\ref{sec:Fermionic systems}, we can interchange the roles of our Majoranas within mode pairs while maintaining this property.
It is important to note that the Fermionic labelling of the Majoranas and modes involve Fermionic operations that do not alter the Pauli structure of the basis of strings generated by this construction.

Thus, we redefine a PPTT mapping by a set of $N$-nodes, each labelled with a triple $(j, u, b) \in [N] \times [Q] \times \{+,-\}$, where $j$ is the index for the Fermionic mode, $u$ is the qubit index of the $Q$-qubit machine, and $b$ specifies the Majorana order in the mode pair. 
Note that each value of $j$ and $u$ can appear only once in the tree. 
For each node with a label $(j,u,b)$, mode-$j$ is assigned to the assigned pair of Pauli strings that exhibit nontrivial difference at qubit-$u$. 

It can be shown that the Majorana operators assigned to a particular node for $b=$`+' correspond to
\begin{gather} \label{eq:PPTT-majoranas}
        m_{2j} \rightarrow S_{s_x^{(u)}} = X_u \prod_{k \in \mathcal{Z}_{x}^{(u)}} Z_k G_{u}, \\
        m_{2j+1} \rightarrow S_{s_y^{(u)}} = Y_u \prod_{k \in \mathcal{Z}_{y}^{(u)}} Z_k G_{u}.
\end{gather}
When we exchange the Majoranas, i.e. $b=$`$-$', we get
\begin{gather} \label{eq:PPTT-majoranas-braided}
        m_{2j} \rightarrow -S_{s_y^{(u)}} = -Y_u \prod_{k \in \mathcal{Z}_{y}^{(u)}} Z_k G_{u}, \\
        m_{2j+1} \rightarrow S_{s_x^{(u)}} = X_u \prod_{k \in \mathcal{Z}_{x}^{(u)}} Z_k G_{u}.
\end{gather}
where $\mathcal{Z}_{x}^{(u)}$ and $\mathcal{Z}_{y}^{(u)}$ are sets of qubits that $S_{s_x^{(u)}}$ and $S_{s_y^{(u)}}$ act non-trivially on below qubit $u$ in the tree, and $G_u$ is a common Pauli string. 
This equation can be graphically understood as $G_u$ being the common path of $S_{s_x^{(u)}}$ and $S_{s_y^{(u)}}$ from the root to qubit-$u$ and sets $\mathcal{Z}_{x}^{(u)}$ and $\mathcal{Z}_{y}^{(u)}$ are qubits along the $Z$-paths bifurcating from the $X$- and $Y$-legs of qubit-$u$. 
Note that $ \mathcal{Z}_{x}^{(u)} \cap \mathcal{Z}_{y}^{(u)} =\emptyset$ for any $x,y$. 

The mapped creation and annihilation mode operators for $c=$`$+$' are thus,
 \begin{equation}
     \label{eq:general_form_tree_op}
     \begin{aligned}
             &a_j \mapsto \frac{1}{2} \left( X_{u} \prod_{k \in \mathcal{Z}_{x}^{(u)}} Z_k + i Y_u \prod_{k \in \mathcal{Z}_{y}^{(u)}} Z_k \right) G_u,\\
             &a_j^{\dagger} \mapsto \frac{1}{2} \left( X_{u} \prod_{k \in \mathcal{Z}_{x}^{(u)}} Z_k - i Y_u \prod_{k \in \mathcal{Z}_{y}^{(u)}} Z_k \right) G_u.
     \end{aligned}
\end{equation}
When the roles in the pair are switched, i.e. $c=$`$-$', a complex phase term emerges as in Eq.~\eqref{eq: ferm braid}.

The PP property is a crucial attribute of the mappings, and the pairing algorithm outlined in \cite{miller2022bonsai} is proven to ensure the PP property.
To understand the significance of this property, consider the vacuum state in a PPTT mapping $F_{\text{PP}}$, i.e., $\ket{\text{vac}}\rightarrow \ket{0}$.
In $F_{\text{PP}}$, we associate Majoranas with Pauli strings, i.e.
$(m_0, m_1, \dots , m_{2N-1}) \leftrightarrow (S_0, S_{1}, \dots , S_{2N-1})$ where this notation means we associate $m_0 \leftrightarrow S_0, m_1 \leftrightarrow S_1$ and so on. 
Now envision a mapping $F_{\text{NPP}}$ that differs from $F_{\text{PP}}$ by the assignment of Majoranas to Pauli operators. 
This deviation is captured by permutation $p$ from the initial order in $F_{\text{PP}}$, i.e.  $(m_0, m_1, \dots , m_{2N-1}) \leftrightarrow (S_{p(0)}, S_{p(1)}, \dots , S_{p(2N-1)})$.
The vacuum state in the new mapping can be related to the initial state by $|\tilde{0}\rangle = T \ket{0}$, for some unitary $T$.
To identify this operator, we can break down the permutation to pairwise transpositions, $p = (t_{0}, t_{1})(t_2, t_3)\ldots (t_{k-1}, t_{k})$.
Each transposition can be interpreted as a pairwise exchange of the Majorana strings within qubit modes, where the Fermionic representation is given by Eq.~\eqref{eq: braiding operator}.
Thus, we can express the vacuum state in $F_{\text{NPP}}$ as,
\begin{equation}
    |\tilde{0}\rangle = e^{\tfrac{\pi}{4} S_{t_{k-1}}S_{t_{k}}} \ldots e^{\tfrac{\pi}{4} S_{t_{2}}S_{t_{3}}} e^{\tfrac{\pi}{4} S_{t_{0}}S_{t_{1}}} \ket{0}.
\end{equation}
Therefore, to prepare the vacuum trial state in a non-PP TT mapping, one must apply several of these pairwise braiding unitaries to the original PPTT vacuum state.
In practice, there may be prohibitively many transpositions, and the product $S_i S_j$ can be highly nonlocal, resulting in a large gate cost to merely implement the vacuum state. 
Likewise, these exchanges must be implemented to prepare the Hartree-Fock state defined as the Fermionic product state, $\prod_{k\in O} a^\dagger_k \ket{\text{vac}}$, where $O$ is the set of occupied spin-orbitals.
As in this paper, we seek to reduce the number of CNOT gates, we strictly consider the space of PPTT mappings.

\subsection{State preparation}
Quantum chemistry calculations using VQE-ansatz-based approaches involve the preparation of a parameterised quantum circuit on the qubits of a quantum device, followed by appropriate measurements to determine the desired chemical properties. 
On a quantum computer, we are usually restricted to applying unitary operations to construct this ansatz. 
In quantum chemistry, these unitaries can be derived from chemical principles such as the single- and double-excitation operators in Unitary Coupled-Cluster techniques.
These operations in particular are important as they can construct an ansatz to approximate an electronic wavefunction to arbitrary accuracy \cite{PhysRevLett.84.2108singlesanddoublesareenough,whitfield2011simulation-singlesanddoublesareenough}.
We can express such a Fermionic state, denoted $\ket{\psi_f}$, as a product of $r$-unitaries applied to a reference state, $\ket{\psi_{\rm ref}}$, as:
\begin{equation}
    \label{eq:fermi circ}
    \ket{\psi_f} =\prod_k{u}_k(\theta_k)\ket{\psi_{\rm ref}} = \prod_k e^{ \theta_k {\tau}_k} \ket{\psi_{\rm ref}},
\end{equation}
where ${u}_k(\theta) = \exp( \theta {\tau}_k)$ are unitaries parameterised by $\theta_k$ generated by the Fermionic generators ${\tau}_k$.
This Fermionic circuit is subsequently mapped to qubit space using a Fermion-to-qubit mapping,
\begin{equation}
    \label{eq:fermi circ mapped}
    \ket{\psi_f} \rightarrow \ket{\Psi_q} = \prod_k e^{ \theta_k {T}_k} \ket{\Psi_{\rm ref}},
\end{equation}
where $T_k$ and $\ket{\Psi_{\rm ref}}$ are the qubit generators and reference state. 
The parameters $\theta_k$ are then optimized on the quantum computer with a classical algorithm.

The ADAPT-VQE approach \cite{adapt-vqe-orig-paper} has emerged as a promising avenue for quantum-based chemical state preparation. 
This technique involves iteratively applying parameterized qubit unitaries from a predefined set of operators, referred to as a ``pool", to a reference state while minimizing the energy following the variational principle. 
The choice of the pool significantly impacts both the convergence of the ADAPT algorithm and the gates in the quantum circuit representation of the ansatz.

A common choice is the \textit{Fermionic pool} \cite{adapt-vqe-orig-paper}, comprising single- and double-excitation operations that preserve the spin and particle number of the resulting state. The single and double excitation operators generate the pool:
\begin{gather}
\label{eq: fermi pool}
{\tau}_{j}^{i} = {a}_i^{\dagger} {a}_j - {a}_j^{\dagger} {a}_i \\
{\tau}_{ij}^{kl} = {a}_i^{\dagger} {a}_j^{\dagger} {a}_k {a}_l - {a}_k^{\dagger} {a}_l^{\dagger} {a}_i {a}_j,
\end{gather}
For $j,k \in \{0,\dots, N-1\}$.
We can express these excitation elements in terms of a linear combination of products of underlying Majorana Fermions using Eq.~\eqref{eq:Fermion_to_Majorana}. 
We define the \textit{Majoranic} pool by taking each element of this linear combination and adding it to the pool separately for each element in the Fermionic pool:
\begin{gather}
\label{eq: maj pool}
{\tau}^{v}_{u} = {m_u}{m_v} \\
{\tau}^{uv}_{rs} = \mathrm i {m_u}{m_v}{m_r}{m_s}
\end{gather}
for $u,v,r,s \in \{0,\dots, 2N-1\}$. 

Both of the aforementioned pools possess a proper Fermionic representation, thus ansatz constructed from them have a representation in Fermionic space given by Eq.~\eqref{eq:fermi circ}. 
However, accounting for Fermionic anticommutation when mapping to qubits generally leads to highly nonlocal operators in qubit space. 
The \textit{Quantum-excitation-based} (QEB) \textit{pool} \cite{yordanov2021qubit} aims to rectify this nonlocality by not implementing the exact commutation relations of the operators. 
It is derived by mapping the Fermionic-spin pool using the JW transformation and removing the trailing $Z$-strings, yielding qubit operators:
\begin{gather}
 T_{j}^i =  \frac{\mathrm i}{2}\left(X_i Y_j-Y_i X_j\right), \\ %
    \begin{aligned}
    T^{ij}_{kl} &=  \frac{\mathrm i}{8}(X_i Y_j X_k X_l + Y_i X_j X_k X_l + Y_i Y_j Y_k X_l \\
    & \phantom{ \frac{\mathrm i}{8}(}+ Y_i Y_j X_k Y_l - X_i X_j Y_k X_l - X_i X_j X_k Y_l \\
    & \phantom{\frac{\mathrm i}{8}(}- Y_i X_j Y_k Y_l - X_i Y_j Y_k Y_l).
    \end{aligned}
\end{gather}
The elimination of the parity-checking $Z$-strings is motivated by the existence of efficient circuit representations that can be achieved using a constant number of CNOT gates, resulting in at most 4-local gates. 
Due to the removal of the parity-checking strings, the Fermionic interpretation for this pool is unclear, thus it has been coined ``pseudo-Fermionic". 
Similar to how we did with the Majoranic pool, the QEB pool can be broken down further into the so-called \textit{qubit-pool} \cite{tang2021qubit}, where in this case the unitaries generated by single strings of Pauli operators of weight 2 or 4.
The QEB and qubit-pool are defined only for the Jordan-Wigner mapping, unlike the Fermionic-based pools.

\section{Treespilation: optimizing the Fermion to qubit mapping}
% How we optimise
A Fermion-to-qubit mapping is chosen to encode Fermionic operations and the Hamiltonian in preparing ansatz states. 
The choice of mapping is not unique, and it can significantly impact the efficiency of the resulting circuit. 
In the case of the Fermionic and Majoranic pools, we are entirely free to choose a mapping. 
In a typical workflow, however, a single mapping is selected, and the resulting circuit is optimized with a \textit{transpiler}.

We can in general, exploit this freedom of mapping to optimise a given state. 
To do so, one would need the following elements:
\begin{enumerate}
    \item \textbf{A target Fermionic state, denoted} \(\ket{\psi_f}\): 
    The quantum state must be in a form that allows one to effectively find the corresponding qubit representation.
    Quantum states in Eq.~\eqref{eq:fermi circ} satisfy this requirement.
    
    \item \textbf{A cost function} \(\mathcal C\): For a given Fermionic state \(\ket{\psi_f}\) and mapping \(F\), it computes the quality of the qubit state \(\mathcal C(\ket{\psi_f}, F)\).  
    As this paper focuses on minimizing the number of CNOTs for a transpiled circuit representing the state \(F(\ket{\psi_f})\), a natural choice is to use this number as the cost function. 
    However, as we will show later, other cost functions can be used.
    
    \item \textbf{A set of rules that allow us to transform a given F2Q mapping} \(F\) \textbf{into another} \(F'\): This is essential for optimization algorithms. 
    Various methods, such as simulated annealing, may require rules for generating new mapping candidates based on optimization history.
    
    \item \textbf{An optimization algorithm}: Equipped with the cost function defined above, it searches for high-quality Fermion-to-qubit mappings. 
    Since the space of PPTT mappings is discrete and large, a natural choice is to use metaheuristic algorithms.
\end{enumerate}

In this section, we develop an approach to systematically explore various Fermion-to-qubit mappings to find efficient circuit representations of \(\ket{\psi}\). 
Specifically, we will start by proposing multiple cost functions considered in this paper. 
Then we will introduce the simulated annealing optimization algorithm considered in this paper and present transformations allowing us to explore the space of PPTT mappings. 
Finally, we will present the treespilation algorithm.

\subsection{Cost Functions \label{sec:cost functions}}
We now discuss the cost functions used to optimize the qubit representations, $\ket{\Psi_{q}}$, of $\ket{\psi_{f}}$. 
In present-day quantum devices, the number of CNOT gates is a crucial metric for analysing the feasibility of applying quantum circuits on quantum hardware. 
These gates typically take more time and introduce errors that are approximately ten times higher than those of single-qubit gates~\cite{gateerrors}. 
Consequently, the cost functions employed in this context will prioritize minimizing the CNOT count. 
To effectively optimize, an appropriate cost function must be accurate and fast.

\paragraph{Transpiler cost} 
The ideal cost function involves processing a qubit state $\ket{\psi_q}$ using a transpiler, like those in \textit{Qiskit}~\cite{aleksandrowicz_qiskit_2019} or \textit{TKET}~\cite{Sivarajah_2021}, to calculate metrics such as depth or CNOT count. 
Although this approach offers high accuracy and optimization specific to the transpilation scheme, it might be too slow in practice.

\paragraph{Pauli string cost} 
Alternatively, one can consider a simpler cost based on the Pauli representation of the generators $T_i$ that make up $\ket{\Psi_q}$ as defined in Eq.~\eqref{eq:fermi circ mapped}. 
On a fully connected architecture, implementing a Pauli string represented as $P$ requires $2(k-1)$ CNOT gates, where $k$ is the Pauli weight, i.e. the number of non-identity terms in the string. 
However, on a limited connectivity architecture, implementing the same string necessitates $2(2n - k - 1)$ CNOT gates, where $n$ is the number of nodes in the Steiner tree spanning qubits acted on by Paulis in the string~\cite{Vandaele2022} (i.e., a minimal subtree of the hardware architecture containing the qubits acted on by the Paulis in this string). 
For our purposes, this cost estimates the number of CNOTs by iterating over all the generators $T_i$ that appear in the qubit state $\ket{\Psi_q}$ as defined in Eq.~\eqref{eq:fermi circ mapped}, and then estimating the sum of the number of CNOTs required for all the Paulis that appear in the following generator. 
The problem of finding an optimal Steiner tree is NP-hard in general~\cite{Karp1972}.
However, provided the underlying mapping tree is a subtree of the hardware architecture, i.e. it is a mapping defined by the Bonsai algorithm in \cite{miller2022bonsai}, the strings resulting from single and double excitation operations can be implemented optimally in polynomial time, as we prove in Appendix~\ref{app:steiner-tree}.
This cost function provides a robust measure for Majoranic pools but overlooks CNOT cancellations between adjacent gates. 
In the Fermionic pool, where substantial CNOT cancellations occur within generators, this simplistic approach less accurately tracks CNOT costs. 
Improvement may be achieved by considering the collective structure of Pauli strings within these generators.

\subsection{Updating a tree-based mapping \label{sec:altering}}
When defining a PPTT mapping, we can adjust four degrees of freedom:
\begin{enumerate}
    \item Structure of the ternary tree. \label{item 1}
    \item \label{item 2} Choice of the root node. 
    \item \label{item 4} Association of Fermionic modes to qubit operators. 
    \item Ordering of Majoranas in the Fermionic mode pairs. \label{item 5}
\end{enumerate}
The most basic transformation updating point \ref{item 1} involves moving a leaf to a leg, effectively converting that leg into an edge of the ternary tree.
In this way, we can change both the Pauli-labelling and the edge structure of the tree.
The second degree of freedom in point \ref{item 2} is the ability to switch the designation of the root node.
Coupled with \ref{item 1}, we can deform between this class of mappings as we can from one ordered ternary tree to any other.

\begin{figure}[t]
   \centering
   \includegraphics[width=.95\columnwidth]{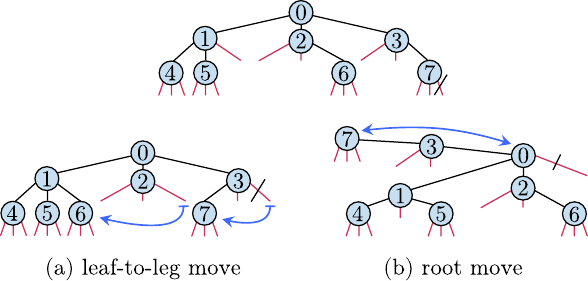}
   \caption{The basic tree transformations allow us to transition between different tree-based mappings. 
   (a) depicts the deformation process of moving external leaves to free legs. 
   This transformation may modify the underlying edges of the ternary tree, as demonstrated by the removal of edge $(2,6)$ and the addition of edge $(1,6)$. Furthermore, the labelling of the edges can be adjusted, as shown by the change from a $Z$ label to a $Y$ label for edge $(3,7)$. 
   (b) illustrates the ability to change the root node.}
   \label{fig: tree deformations}
\end{figure}

Item \ref{item 4} relates to our freedom to choose a bijection between the Fermionic modes and the pairs of qubit Majorana strings representing them.
Furthermore, \ref{item 5} exploits the freedom to switch or braid the Majoranas within the pairs defined in Eq.~\eqref{eq:Fermion_to_Majorana}.
Unlike the tree transformations, these updates do not change the Paulis in the Majorana strings that express the Fermionic system, they only change how said strings are associated with Fermionic mode operators.

All these transformations result in mappings that fall within the vacuum preserving and PP categories.
Notably, the braiding in \ref{item 5} expands the space of possible PPTT mappings beyond the original presentation in \cite{miller2022bonsai}, where braiding was not considered. 
Figure \ref{fig: tree deformations} visually illustrates some of these degrees of freedom.

\subsection{Optimisation}
Using a cost function and transformation rules, we now employ an optimization procedure to find a mapping that minimises the cost of implementing $\ket{\psi_f}$. 
In this work, we adopt simulated annealing as the chosen scheme. 

Simulated annealing is a probabilistic optimization technique inspired by the annealing process in metallurgy. 
It begins with a solution and iteratively explores potential solutions by introducing random changes. Once a new candidate $x'$ is created from $x$, the algorithm evaluates the objective values $E(x)$ and $E(x')$ and decides whether a new candidate should be accepted based on the difference between these objective values. If candidate $x'$ has a smaller objective value, then it is always accepted. At the same time, if the objective value is larger, it is only accepted with a decreasing probability depending on the ``temperature" $t$ that changes during the optimization. 
The probability of accepting a worse solution at iteration \(k\) is $\exp{\left(E(x) - E(x')\right)/t_k}$,
where $t_k$ is the temperature at step $k$.
This probabilistic acceptance of worse solutions allows the algorithm to escape local optima. 
This method applies to a wide range of complex optimization problems, especially those with non-convex or high-dimensional solution spaces where finding optimal solutions is challenging.
Simulated annealing requires defining a way of applying ``local'' changes to a solution, and the aforementioned transformation rules offer exactly this.
We note that more advanced optimization algorithms like Tabu search may be used.
Bringing this together we introduce \textit{treespilation}, for optimisation of a Fermionic states' mapping.

\section{Methodology} \label{sec:methodology}
We now describe the methodologies used to produce the results presented in this paper.
On full-connectivity devices (FC), we initiate the annealing process with a randomly generated mapping tree. 
For limited-connectivity (LC), we start with a Bonsai transformation of the device \cite{miller2022bonsai} so the tree underlying the mapping is connected on the device, and from it, we can determine the mapping from virtual to physical qubits.

For optimization purposes, we use the following adapted mapping tree transformations:
\begin{enumerate}
    \item \emph{Leaf move}: For FC, choose a random terminal node-$v$ with three legs and attach it to the free leg of another node-$w$ that is neither $v$ nor its parent. 
    For LC constraints, two choices are available: If starting from a Bonsai transformation, i.e., the mapping tree is connected on the device, the new qubit that node-$v$ assumes must be physically connected on the device to the qubit that parent node-$w$ represents. 
    We refer to this restriction as \textit{Connectivity Preserving} (CP), as the tree remains a connected tree on the underlying connectivity. 
    Alternatively, one can perform \textit{Non-Connectivity Preserving} (NCP) optimization, where this constraint is not applied, and the process proceeds as if it were FC. 
    The CP and NCP optimization strategies are demonstrated in Figure~\ref{fig:CP and NCP mappings}.
    
    \item \emph{root change}: 
    A node $v$ different than the root with out-degree at most 2 is chosen as a new root. 
    The path from root to $v$ is identified, and the tree is updated so child and parent designations are swapped along the path.
    
    \item \emph{Pauli shuffle}: 
    A random node with an out-degree of at least 1 is chosen and the Pauli operators associated with the links are changed.

    \item \emph{mode association swap}: 
    Two nodes with labels $(i,u,b)$ and $(i',u',b')$ are chosen at random and their labels are changed to $(i',u,b)$ and $(i,u',b')$ respectively.

    \item \emph{Majorana braiding change}: 
    A node with label $(i,u,b)$ is chosen at random, and the braiding $b$ is changed to the opposite one, i.e., `+' is changed to `$-$' and vice versa.
\end{enumerate}
While it is feasible to propose an alternative set of transformations, we demonstrate in Appendix~\ref{sec:reachability} that for heavy-hexagonal and 2D grid hardware graphs, it is possible to convert any given PPTT F2Q mapping into another using only connectivity-preserving leaf moves and the four other transformations. 
This proof establishes that these transformations provide sufficient control to derive a high-quality mapping. 
Additionally, one might anticipate a significant increase in possibilities with the growth of qubit numbers in the device. 
However, as detailed in Appendix~\ref{sec:number-of-f2q}, the number of PPTT F2Q mappings on bounded-degree hardware graphs is roughly equal to the total number of PPTT F2Q mappings. 
Furthermore, the dependency on the number of qubits can effectively become negligible.

We iteratively update the mapping with simulated annealing, choosing to propose one of the random updates above with equal likelihood.
For the Fermionic pool, we note the CNOT count of the generator is invariant to pairwise braiding so we do not use them here.
As mentioned, we consider the CP and NCP search settings of the algorithm.
Moreover, we also search in the restricted space of mappings generated by fixing the underlying mapping as JW and optimizing the assignment of qubit modes to Fermionic modes.
We name this setting Mode Shuffling (MS).

\begin{figure}[t]
    \centering
    \includegraphics[width=0.99\linewidth]{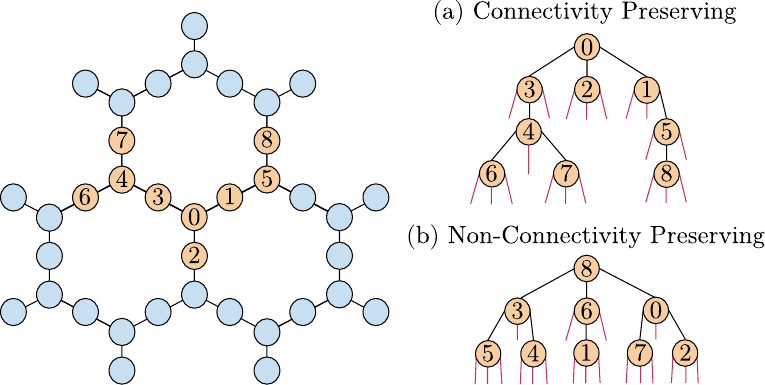}
    \caption{Illustration of Connectivity Preserving (CP) and Non-Connectivity Preserving (NCP) updates for a 9-mode PPTT mapping on a 37-qubit heavy-hexagon processor.
    The underlying mapping tree for CP, (a), is made of edges that are connected on the device targeted. For NCP, (b), the edges of the tree do not exist on the device}
    \label{fig:CP and NCP mappings}
\end{figure}

To benchmark our approach, we compare it with Fermionic and Majoranic pools mapped using the JW encoding with gate compilation and transpilation being pool and mapping dependant.
For the Fermionic pool, we utilize a circuit representation of excitation generators from \cite{Yordanov2020}, known for its CNOT efficiency in the JW encoding to compile the ansatz initially. 
For MS with this pool, we benefit from this representation as we use the JW encoding. 
After optimizing the mapping beyond simple mode association permutation, we employ the TKET compilation pass from \cite{Cowtan2020AGC} for effective gate compilation.

For the Majoranic pool on FC and LC, we initially compile generators using the standard CNOT staircase approach.
After treespilation on FC, we apply the same scheme.
On LC with CP optimization, we use a Steiner-tree compilation~\cite{Vandaele2022} with an optimal Steiner-tree generation (details in Appendix \ref{app:steiner-tree}). 
The assignment from logical to physical qubits, necessary for the latter approach, is provided by the underlying mapping tree.
For NCP, we compile strings using the standard CNOT staircase approach since the underlying tree is not connected on the device. 

QEB and qubit pools are analyzed using the representations in \cite{Yordanov2020} and the staircase approach, respectively. 
In all cases, after compiling pool elements in the ansatz, we use the \textit{Qiskit} transpiler at optimization level 3 for circuit optimization. 
For all cases on LC, bar CP, the transpiler is used to find an assignment between virtual and physical qubits.
Refer to Table \ref{tab:transpilers} in the appendix for a summary of compilation and transpilation passes used.

We assess our technique with both the Pauli and Transpiler cost functions for the Fermionic and Majoranic pools. 
We use the same passes as described previously for the transpiler cost function. 
To mitigate the stochastic nature of the Qiskit transpiler and the simulated annealing optimization scheme, was run five times and the minimal cost was selected. 
With each call of the transpiler cost, we run the transpiler a single time.

We evaluate our technique using both the Pauli and Transpiler cost functions for the Fermionic and Majoranic pools. 
The transpiler cost function employs the same passes as described earlier. 
To address the stochastic nature of the Qiskit transpiler and the simulated annealing optimization scheme, we run them five times and select the minimal cost for the final results. 
Each transpiler cost assessment involves a single transpiler run.

Our study focuses on the ground state of six molecules: LiH, N$_2$, BeH$_2$, H$_6$, C$_6$H$_8$, and C$_8$H$_{10}$. 
When mapped to qubits, these molecules require 12, 12, 14, 12, 12, and 16 qubits, respectively, for the active spaces and basis sets detailed in Table~\ref{tab:Molecular data}. 
Ansatz are prepared through classical numerical simulations using the ADAPT-VQE algorithm, with results within an error margin of $10^{-3}$ Hartree compared to the exact diagonalization result.

Plots illustrating the convergence of energy against CNOT gates are generated. 
After each excitation gate is added to the ansatz in the ADAPT procedure, it undergoes treespilation. 
We compare CNOT counts for full connectivity as well as IBM Eagle heavy-hexagon and grid-based Google Sycamore architectures. 
The connectivity graphs are displayed in the Appendix (see Section~\ref{fig:connectivity graphs}).

\section{Results}

\begin{table}[t]
    \begin{ruledtabular}
        \begin{tabular}{lcccccc}
        Molecule             & \multicolumn{2}{c}{Fermionic}      & \multicolumn{2}{c}{Majoranic}      & QEB      & Qubit        \\
                             & \textit{In.}     & \textit{Fin.}   & \textit{In.}     & \textit{Fin.}   &          &           \\\hline
        \multicolumn{7}{c}{$(a)$ \textbf{\textit{Full Connectivity}}}                                                         \\
        LiH (12)             &  158            & 98               & 154              & \textbf{40  }   & 97            & 74        \\
        N$_2$ (12)           &  452            & 326              & 398              & \textbf{194 }   & 572           & 434       \\
        BeH$_2$ (14)             &  580            & \textbf{356}     & 962              & 438             & 554           & 774       \\
        H$_6$ (12)           &  1660           & 1442             & 2398             & 1836            & \textbf{1400} & 1814      \\
        C$_6$H$_8$ (12)      &  362            & 262              & 274              & \textbf{122 }   & 216           & 156       \\
        C$_8$H$_{10}$ (16)   &  564            & 436              & 548              & 350             & 404           & \textbf{300}  
        \\[1ex]   \hline
        \multicolumn{7}{c}{$(b)$ \textbf{\textit{Google Sycamore}}}                                                            \\
        LiH (12)             &  325            & 148              & 198              & \textbf{72  }   & 184      & 120        \\
        N$_2$ (12)           &  1001           & 641              & 519              & \textbf{264 }   & 1236     & 890        \\
        BeH$_2$ (14)             &  1289           & 729              & 1438             & \textbf{612 }   & 1214     & 1646       \\
        H$_6$ (12)           &  3802           & 3359             & 3681             & \textbf{2216}   & 3117     & 3896       \\
        C$_6$H$_8$ (12)      &  804            & 523              & 356              & \textbf{150 }   & 415      & 278        \\
        C$_8$H$_{10}$ (16)   &  1227           & 914              & 750              & \textbf{420 }   & 870      & 599      
        \\[1ex] \hline
        \multicolumn{7}{c}{$(c)$ \textbf{\textit{IBM Eagle}}}                                                                   \\ 
        LiH (12)             &  463            & 224              & 271              & \textbf{102 }   & 256      & 202         \\
        N$_2$ (12)           &  1585           & 1006             & 665              & \textbf{350 }   & 1758     & 1444        \\
        BeH$_2$ (14)             &  2059           & 1013             & 1917             & \textbf{824 }   & 1696     & 2715         \\
        H$_6$ (12)           &  5783           & 4888             & 5261             & \textbf{2930}   & 4506     & 6284         \\
        C$_6$H$_8$ (12)      &  1276           & 770              & 490              & \textbf{202 }   & 598      & 445          \\
        C$_8$H$_{10}$ (16)   &  1898           & 1358             & 943              & \textbf{566}    & 1212     & 913          \\
        \end{tabular}
    \end{ruledtabular}
    \caption{CNOT counts for groundstate preparation circuits of various molecules obtained using the ADAPT-VQE algorithm with different choices of pools transpiled on both full and limited connectivity quantum devices. 
    The number of qubits used is indicated in brackets beside the molecule label. 
    Counts in bold correspond to the minimum of the rows.
    The initial (\textit{In.}) columns display counts of transpiled results using the JW encoding, while the final (\textit{Fin.}) column shows the results after treespilation and transpilation. 
    Treespilation is only applied to ansatz with clear Fermionic representations, and as such, it is not used with qubit or QEB pools. 
    For detailed information regarding molecular geometries, ADAPT-VQE convergence, device topologies, and transpilers used refer to the Appendix section~\ref{app:molecular data}. \label{tab:molecular results}}
\end{table}

\begin{figure*}[ht]
    \centering
    \includegraphics[width=0.95\linewidth]{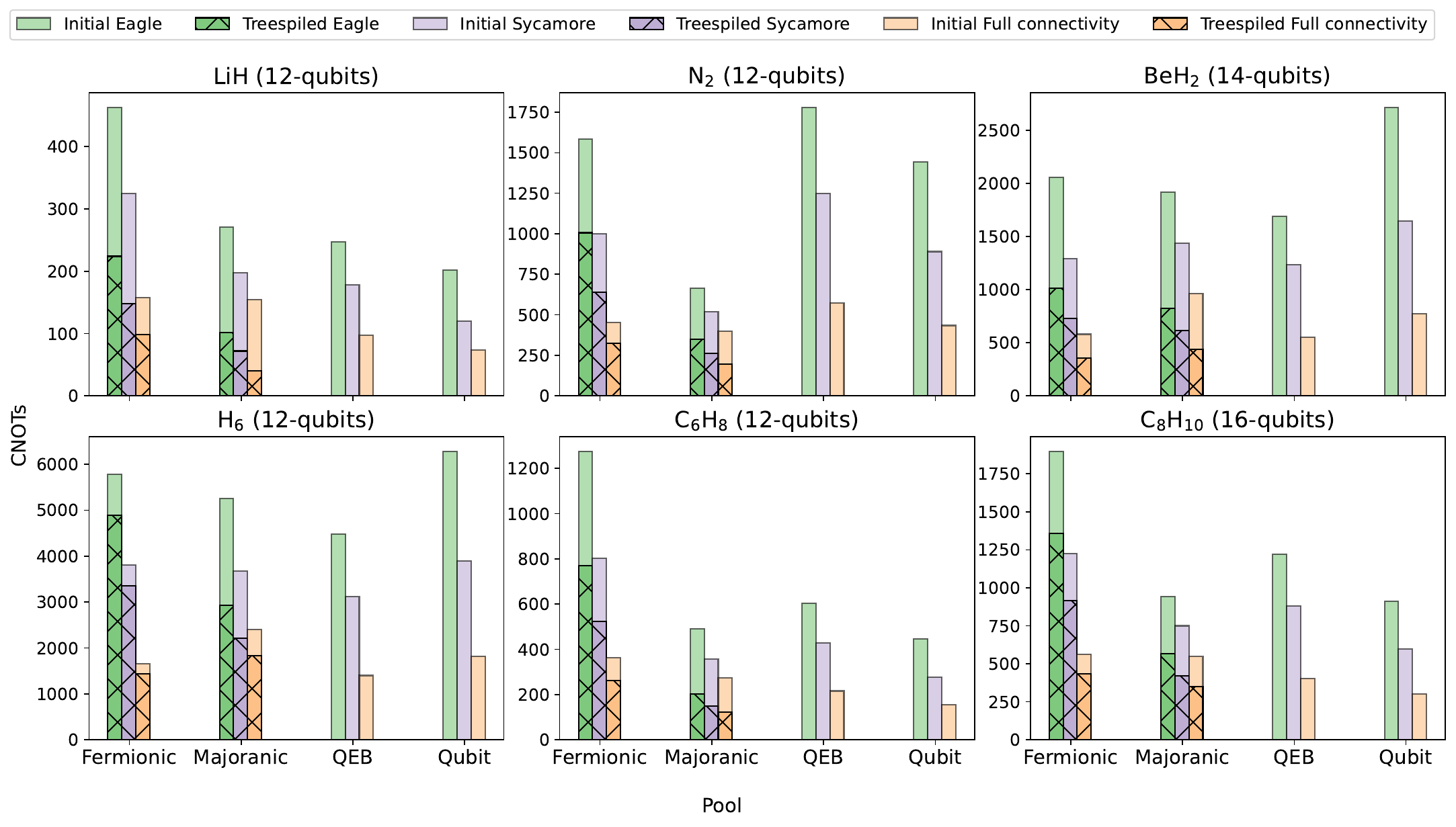}
    \caption{Visualisation of results tabulated in~\ref{tab:molecular results}. }
    \label{fig:molecular results}
\end{figure*}
From analysing the number of generators in the ADAPT simulations, in each case, we see the Fermionic and Majoranic pools converge to $10^{-3}$ precision in fewer or the same number of parameters compared to their non-Fermionic counterparts, the QEB and qubit pools respectively. 
This suggests that the Fermionic ground state can be more easily expressed using true Fermionic operations. 
Thus, we identify a potential tradeoff: using Fermionic operators allows for reaching a given precision in fewer iterations while using non-Fermionic counterparts may require more iterations but potentially fewer CNOTs.

In Tables~\ref{tab: FC results}, \ref{tab: Sycamore results}, and \ref{tab: Eagle results}, we present the Full Connectivity (FC) and Limited Connectivity (LC) results of our treespilation algorithm. 
For FC, we considered unconstrained and Mode Shuffling (MS) search settings with both Transpiler Cost (TC) and Pauli Cost (PC) functions. 
On LC, we explore MS, and Connectivity Preserving (CP) and Non-Connectivity Preserving (NCP) settings with both costs. 
Overall, we observe a significant reduction in the CNOT cost of implementing these states with treespilation.

Specifically, for the Majoranic pool on FC, the CNOT count reduced by an average of $49\%$, and with the Mode Shuffling (MS) search space, the reduction was $33\%$, with little difference between the Transpiler Cost (TC) and Pauli Cost (PC) functions. 
For LC, we see reductions of $51\%$ and $49\%$ for the Connectivity Preserving (CP) and Non-Connectivity Preserving (NCP) search spaces, respectively.
This indicates that the CP space, corresponding to Bonsai mappings with pairwise mode braidings, contains high-quality solutions. 
We also found that the PC cost effectively represents the CNOT cost and can replace the more time-consuming TC.
MS on LC performed worse, and a significant disparity between cost functions was observed, with TC and PC achieving reductions of $33\%$ and $28\%$, respectively, likely due to cancellations not accounted for with the JW mapping.

For the Fermionic pool on FC, we observe an improvement of $25\%$ and $19\%$ for TC and PC, respectively.
With MS, TC outperformed PC with a $27\%$ and $25\%$ reduction, respectively. 
On LC, we saw improvements of $28\%$ and $25\%$ for TC and PC, with no discernible difference between the CP and NCP search spaces. 
For MS, we see a $33\%$ and $21\%$ reduction in TC and PC, respectively.
Now, the PC cost does not faithfully represent the resulting CNOT cost.
With this pool, we find MS generally performs best.
We partially attribute this success to the ability to use efficient circuit compilation passes with these mappings. 
The PC cost did not faithfully represent the resulting CNOT cost for the Fermionic pool on LC, likely due to the compilation scheme of the Fermionic Pauli generators not sufficiently accounting for LC constraints and CNOT cancellations.

In Figure~\ref{fig:molecular results} and Table~\ref{tab:molecular results}, we present the best-performing setting and cost function of treespilation, showcasing the resulting CNOT counts. 
We achieved an overall average improvement of $28\%$ and $49\%$ for the Fermionic and Majoranic pools compared to the initial JW encoded ansatz on FC.
The largest reduction was for the Majoranic pool with LiH, reaching three-quarters.
For the Sycamore device, we report a respective average improvement of $34\%$ and $52\%$, and for the Eagle, $37\%$ and $52\%$. 

For reference, we include counts of the QEB and qubit pool ansatz. 
On FC, the best-treespiled result exhibits an average improvement of $23\%$ compared to the best performance of the QEB and qubit pools.  
For LC, the improvement is $44\%$ and $51\%$ for Sycamore and Eagle connectivities, respectively.
We observe that CNOT counts on the Sycamore device are lower than those of the Eagle. 
This is pronounced in the case of the Fermionic and QEB pools, where the higher average degree of connectivity of each qubit makes transpiling these operators easier, as they demand a high degree of connectivity to implement efficiently.

\begin{figure*}[t]    
\centering
    \includegraphics[width=.9 \linewidth]{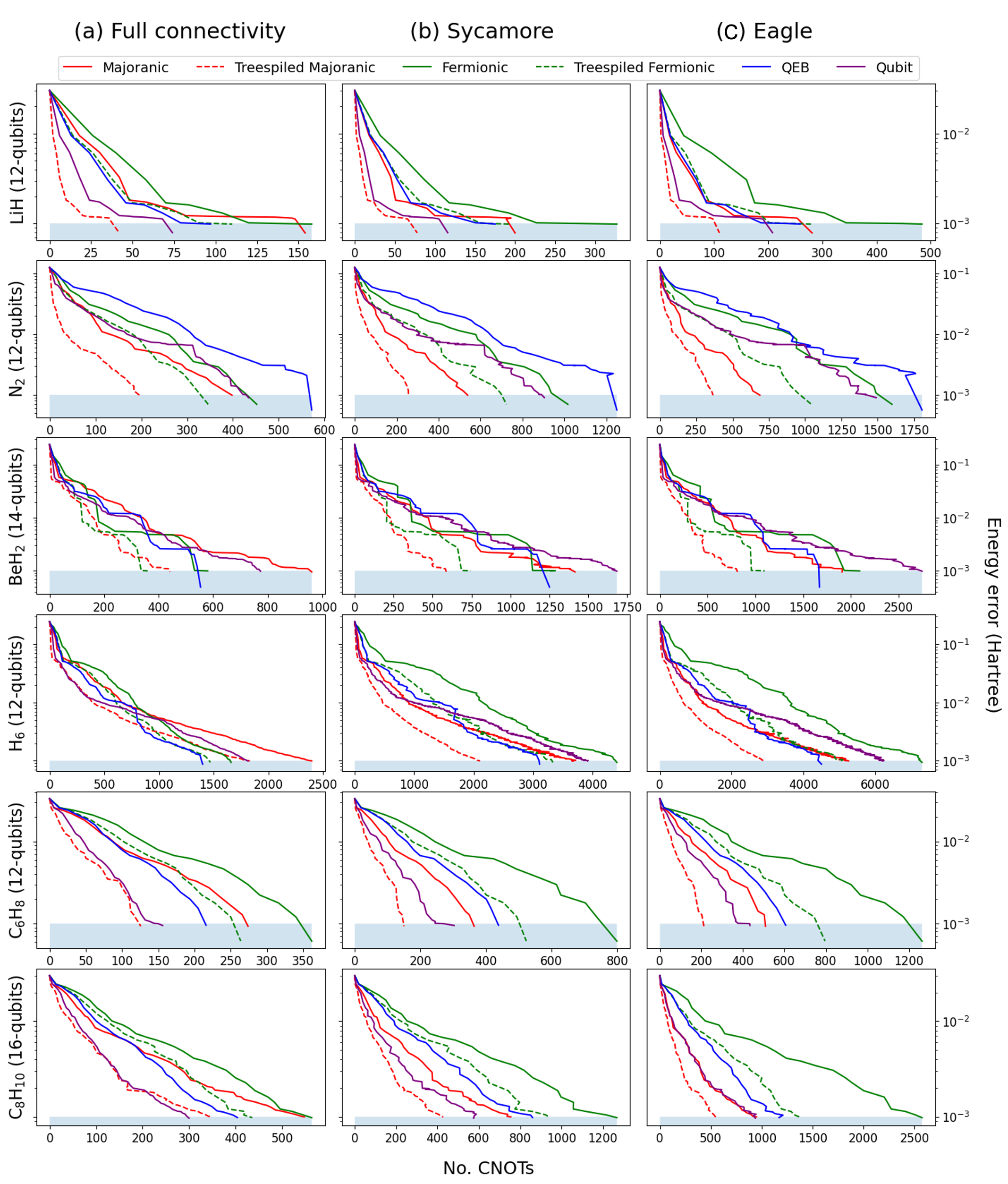}
    \caption{
    CNOT convergence of ADAPT-VQE groundstate simulations across six molecules, considering full, sycamore, and eagle connectivity.
    For each molecule, six curves are presented: one for QEB- and qubit-pools, and one each before and after treespilation with the Fermionic- and Majoranic pools.
    Before treespilation, the ansatz is encoded using the Jordan-Wigner transformation. 
    The dashed lines in the treespilation curves represent points where the optimized mapping from the previous point serves as the starting point for the algorithm, and the resulting CNOT count is plotted.
    Due to the stochastic nature of the qiskit transpiler used, the lines do not exhibit monotonic convergence.
    }
    \label{fig:ADAPT convergence cnots}
\end{figure*}

We found that treespilation shifted the aforementioned tradeoff in favour of the Fermionic pools, resulting in improved performance compared to their non-Fermionic counterparts. 
One might attribute this to the notion that the Fermionic and Majoranic pools generally have fewer elements in their ansatz. 
However, this is not entirely the case as, for example, with LiH where the number of ansatz elements was identical, the Majoranic pool outperforms the qubit pool.

We analyzed the performances across the entire energy convergence against CNOTs of the ansatz in Figure~\ref{fig:ADAPT convergence cnots}. 
We opted for unrestricted, and CP search spaces with PC for the Majoranic pool on both FC and LC, as we noted satisfactory performance with these selections.
Selecting CP ensures that the resulting mapping for quadratic- and quartic-Majorana-products, represented by the pool elements in qubit space, can be optimally implemented in polynomial time, as demonstrated in Appendix~\ref{app:steiner-tree}.
MS with TC performed best for the Fermionic pool, so we highlight this setting and cost. 
The treespiled Majoranic pool performed exceptionally well on LC, often outperforming the initial JW-encoded FC result. 
On the other hand, the qubit pool's performance was underwhelming partially due to more generators in the ansatz.
Another factor is the increased necessity for costly SWAP gates to interconnect the sparse generators on LC.
This contrasts the treespiled Majoranic pool, where we use the mappings tree structure to reduce this sparsity, replacing it with Pauli operators on qubits where SWAPs would have been necessary.
For example, with the CP setting the mapping tree is connected on the device with nodes representing physical qubits, causing the resulting Pauli string generators to act on `mostly connected' qubits.
Moreover, we cannot perform the efficient Steiner-tree pass, as we do not know the optimal layout of this pool \textit{ab-initio}. 

Overall, we find that using proper Fermionic pools can significantly reduce the number of CNOTs required for implementing ansatz compared to non-Fermionic pools. 
Treespilation provides an effective approach to achieve these reductions, particularly for the Majoranic pool on LC. 
However, more efficient compilation methods are needed for the Fermionic pool on LC.

\section{Conclusion}
In this paper, we have presented a Fermion-to-qubit mapping scheme to reduce the number of CNOT gates required to implement Fermionic states through quantum circuits. 
To achieve this, we defined a space of ``good" mappings characterized by product-preserving ternary tree-based mappings combined with Majorana braidings within Fermionic mode pairs. 
Additionally, we introduced fundamental tree-mapping transformations allowing for the deformation of any mapping within this class, along with a procedure for optimization within this space.
Furthermore, we establish cost functions that quantify a mapping's CNOT cost within this space.
Using these tools, we introduce the ``treespilation" algorithm that tailors a mapping to the Fermionic representation of the ansatz while considering the potentially limited-connectivity architecture of the quantum device. 
Essentially, our method can be seen as a meta-compilation approach that augments the results from a given transpilation scheme by optimizing the qubit representation of a Fermionic state.

To illustrate our approach, we applied it to ansatz generated through statevector ADAPT-VQE simulations, representing ground-state approximations of molecules with different Fermionic-based operator pools. 
In summary, our method yielded encodings of Fermionic states that significantly reduced the number of CNOTs required to represent the qubit state on both full and limited connectivity quantum computers.
For instance, in the case of LiH on full connectivity, we observed a remarkable 74\% reduction in CNOTs compared to the initial Jordan-Wigner encoded ansatz.
Additionally, when comparing our method to some similar, CNOT-efficient, non-Fermionic-based pools, we found that, on average, our approach significantly outperforms them.

In summary, our scheme presents a promising approach to reducing the CNOT requirements for implementing Fermionic ansatz on full and limited connectivity quantum computers. 
Using it, we show that we can essentially eliminate much of the circuit burden usually incurred by the mapping of many-body Fermionic states to qubits.
Furthermore, we anticipate that the tools introduced in this paper may be utilized to optimize various aspects of Fermionic simulations, including measurement cost, state fidelity, circuit depth and so on. 
It's important to note that while we use ADAPT-VQE as an example, our methodology is broadly applicable in optimizing the representation of Fermionic unitaries, such as chemical Hamiltonians. 
Further work may explore more advanced optimization schemes than simulated annealing.
Here, we explore mappings where the number of modes is equal to the number of qubits simulated, however, it would be interesting to study the effect of introducing ancillary and removing qubits through symmetries. 

\section*{Acknowledgements}
The authors thank Anton Nykänen for producing the ansatz studied in this paper, and Guillermo García-Pérez for fruitful discussions.

\section*{Competing interests}
Elements of this work are included in patents filed by Algorithmiq Ltd with the European Patent Office.

\section*{Author contributions}
AM conceived the method. AM, AG, and ZZ  initiated and planned the research. AM implemented the algorithm and made numerical analysis. AM and AG were responsible for the analytical investigations presented in the Appendices. AM wrote the first version of the manuscript. AM, AG, and ZZ  contributed to scientific discussions and to the writing of the manuscript.

\section*{Additional Information}
The \textit{treespilation} algorithm is implemented as part of \textit{Aurora}’s suite of algorithms for chemistry simulation. 
Work on ``Quantum Computing for Photon-Drug Interactions in Cancer Prevention and Treatment" is supported by Wellcome Leap as part of the Q4Bio Program.
%Numerical simulations and the Resulting ansatz and Hamiltonians are available on \textit{Github} at:

\onecolumngrid

\appendix

\section{Polynomial-time algorithm for Steiner-based implementation of Pauli exponentiation \label{app:steiner-tree}}
Using the results from~\cite{Vandaele2022}, one can find an efficient implementation of arbitrary exponentiation $\exp(-\mathrm i tP)$ for any real $t$ and any Pauli string $P$ on limited connectivity defined through a connected graph $G=(V,E)$. 
First, we apply one-qubit gates to transform the operation $\exp(-\mathrm i tP)$ into $\exp(-\mathrm i tP_Z)$, where $P_Z$ is a Pauli string defined over $\{I, Z\}$.  
The transformed Pauli, $P_Z$, acts non-trivially on the same set of qubits $V_P \subseteq V$ as $P$.
We proceed then exactly as described in~\cite{Vandaele2022}. 
A Steiner tree $T = (V_T, E_T)$ of $V_P$ over $G$ is found, which is the minimal in the number of edges subgraph of $G$ such that $V_T \supseteq V_P$. 
This generates the minimal set of qubits that need to interact to implement $\exp(-i tP_Z)$. 
Then, the following operations are applied iteratively:
\begin{enumerate}
    \item we choose a particular leaf $v$ of the tree and a vertex $w$ connected to it,
    \item we apply CNOT with control on $v$ and target on $w$; if $w\not\in V_P$, in addition, we apply CNOT with control on $w$ and target on $v$,
    \item we remove $v$ from $V_T$, $\{v,w\}$ from $E_T$, and we repeat the above steps.
\end{enumerate}
The procedure ends when we are left with just one node $v'$, on which we apply the rotation $\exp(-i t Z_{v'})$. Finally, we uncompute all the steps done before the single-qubit rotation.
The procedure requires $2(2 V_T - V_P - 1)$ CNOTs.

Thus, the difficulty of finding a CNOT efficient circuit is reduced to the problem of finding the Steiner tree, which is known to be NP-hard~\cite{Karp1972}. However, if the ternary tree of PPTT-based F2Q is a subgraph of the connectivity graph $G$, one can show that the problem of finding Steiner trees for any product of a Pauli string resulting from the product of 2 or 4 Majoranas can be done efficiently. 
These cases are important as the resulting strings compose the Fermionic single- and double-excitation operations described in the main text and also form the Pauli strings in the second quantised Hamiltonian.
The rest of the section is dedicated to formally proving this fact. 
We start by introducing a variant of the Steiner tree problem. 
We adopt the notation $G[V']$ for the induced subgraph of $G$ generated by the vertex subset $V'$.

\newcommand{\ppttsteinertree}{\textsc{PPTT F2Q Steiner Tree Problem}\xspace}
\begin{definition}[\ppttsteinertree] Let $G=(V,E)$ be an undirected connected graph, and $V' \subseteq V$ be a set of terminals. We call finding a Steiner tree of $V'$ in $G$ a \ppttsteinertree if at least one of the following holds:
\begin{enumerate}
    \item $G[V']$ has at most two connected components, or
    \item $G[V']$ is disconnected and there is exactly one vertex $v \in V \setminus V'$ s.t. the $G[V' \cup \{v\}]$ is a connected component
\end{enumerate}
\end{definition}

Let us start by showing that it is easy to solve the \ppttsteinertree exactly.
\begin{theorem}\label{theorem:pptt-in-p}
\ppttsteinertree is in \textbf{\emph P}.
\end{theorem}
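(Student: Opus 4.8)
The plan is to recast the problem as \emph{minimising the number of Steiner vertices}: since any tree with vertex set $V_T \supseteq V'$ has exactly $|V_T| - 1$ edges, minimising the number of edges of a Steiner tree is equivalent to minimising the number of non-terminal vertices $|V_T \setminus V'|$ it uses. The guiding observation is that a connected component $C$ of $G[V']$ admits a spanning tree using \emph{zero} Steiner vertices (all of its vertices already lie in $V'$), and that no tree spanning those $|C|$ terminals can do better. The whole difficulty therefore reduces to how cheaply the components of $G[V']$ can be glued together, and the two clauses of the definition are precisely the regimes in which this gluing is trivial to optimise.

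First I would dispatch Case~1. If $G[V']$ is connected, I output any spanning tree of $G[V']$; it uses $0$ Steiner vertices and is optimal. If $G[V']$ has exactly two components $C_1, C_2$, I compute the graph distance $d := d_G(C_1,C_2)$ and a realising shortest path $P$ by a single breadth-first search seeded simultaneously from all of $C_1$. Because $C_1$ and $C_2$ are distinct components of the \emph{induced} subgraph $G[V']$, there is no $G$-edge between them, so $d \ge 2$ and the interior vertices of $P$ lie outside $V'$ (an interior terminal would either shorten $P$ or merge the two components). The output $\mathrm{span}(C_1) \cup \mathrm{span}(C_2) \cup P$ then uses exactly $d-1$ Steiner vertices.

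For optimality in this case I would prove a matching lower bound. Let $T$ be any Steiner tree and choose $a \in C_1$, $b \in C_2$ minimising the tree distance $d_T(a,b)$. The unique $a$–$b$ path in $T$ is a path in $G$, hence of length at least $d$, and by minimality of $d_T(a,b)$ its interior meets neither $C_1$ nor $C_2$ (with only two components, there is nowhere else for an interior terminal to lie); thus the path contains at least $d-1$ distinct non-terminal vertices, forcing $|V_T \setminus V'| \ge d-1$. This single inequality is the one step requiring genuine care and is, to my mind, the main obstacle of the proof; the rest is routine.

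Finally, Case~2: $G[V']$ is disconnected but a single external vertex $v$ renders $G[V' \cup \{v\}]$ connected. I find $v$ by scanning $V \setminus V'$ and testing connectivity of $G[V' \cup \{v\}]$ (the promise guarantees such a $v$ exists), then output a spanning tree of $G[V'\cup\{v\}]$, which uses exactly one Steiner vertex; since $G[V']$ is disconnected, at least one Steiner vertex is unavoidable, so this is optimal. The overall algorithm computes the connected components of $G[V']$ and branches on whether there are at most two (Case~1) or at least three (where the promise forces Case~2). Every subroutine invoked — connected components, a BFS for the inter-component shortest path, spanning-tree extraction, and the linear scan over $V \setminus V'$ — runs in polynomial time, establishing membership in \textbf{P}.
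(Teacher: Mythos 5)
Your proof is correct and follows essentially the same route as the paper's: a case split mirroring the two clauses of the definition, spanning trees for the connected pieces, a shortest inter-component path (whose interior must avoid $V'$) to glue two components, and the single vertex $v$ in the remaining case. The only difference is cosmetic — you reformulate edge-minimisation as Steiner-vertex minimisation and spell out the matching lower bound $|V_T\setminus V'|\ge d-1$ that the paper dispatches in one sentence, which makes the optimality claim for the two-component case more airtight.
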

\begin{proof}

Suppose that $G[V']$ is connected. Then it is enough to find any spanning tree of the graph, which can be done in polynomial time. 

Suppose that $G[V']$ consists of two connected components, defined over vertex sets $V'_1$ and $V'_2$. Let $(v,v_1,\dots,v_k,w)$ be the shortest path over $(v,w)\in V'_1 \times V'_2$. Such a path can be found by applying e.g. Floyd-Warshall algorithm that runs in $\order{|V|^3}$ time, and exhaustively searches for the smallest distance which takes $\order{|V'_1|\cdot | V'_2|} = \order{|V|^2}$. Note that $v_i \in V\setminus V'$, as otherwise shortest path could be found that connects vertices from $V'_1$ and $V'_2$. Note that adding edges from this path to $G[V']$ makes a new connected graph, for which we can find a spanning tree efficiently. By the definition, such a graph is the Steiner tree of $V'$. On the other hand, one cannot hope for finding a smaller Steiner tree as it would contradict the chosen path $(v,v_1,\dots,v_k,w)$ to be the shortest path.

Finally, if $G[V']$ is disconnected up to one vertex $v$, it is clear that the spanning tree of $G[V' \cup \{v\}]$ is a minimal Steiner tree for $V'$. 
\end{proof}

In addition, we can show that adding edges to the graph $G$ does not make the problem harder.
\begin{theorem} \label{theorem:monotonicity-ppttsteinertree}
    Suppose we're given a \ppttsteinertree defined over graph $G=(V,E)$ and set of terminals $V' \subseteq V$. Then adding any new edge to $E$ makes it still an instance of \ppttsteinertree.
\end{theorem}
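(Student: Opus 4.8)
The plan is to fix the new edge $e=\{a,b\}\notin E$, write $G'=(V,E\cup\{e\})$ for the augmented graph, and show that the pair $(G',V')$ still satisfies one of the two defining conditions. I would lean on three elementary monotonicity facts about edge insertion: (i) adding an edge never increases the number of connected components, lowering it by one when it joins two distinct components and leaving it unchanged otherwise; (ii) if $G[W]$ is connected for some $W\subseteq V$, then so is $G'[W]$, since connectivity is preserved under adding edges; and (iii) the induced subgraph is unaffected, $G'[V']=G[V']$, unless both endpoints $a,b$ lie in $V'$. These three facts do essentially all the work.

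First I would split on whether the new edge is internal to the terminal set. If at least one endpoint lies outside $V'$, then by fact (iii) we have $G'[V']=G[V']$, so the component structure of the induced subgraph is untouched. If the original instance met condition~1 it trivially still does; if it met condition~2, then $G'[V']$ is still disconnected, and the Steiner vertex $v_0$ that connected $G[V'\cup\{v_0\}]$ continues to connect $G'[V'\cup\{v_0\}]$ by fact (ii), so condition~2 survives.

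The more delicate case is when both $a,b\in V'$, since then the edge lives inside $G[V']$ and, by fact (i), the number of components either stays the same or drops by one. If $G$ satisfied condition~1, the count was at most two and can only decrease, so condition~1 persists for $G'$. If $G$ satisfied condition~2, I would branch on the new component count: should it fall to two or fewer, condition~1 now holds for $G'$; should it remain at least three, $G'[V']$ is still disconnected and, again by fact (ii), the original Steiner vertex $v_0$ still connects $G'[V'\cup\{v_0\}]$, so condition~2 is retained. Collecting the cases shows $(G',V')$ is always a valid instance.

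The main subtlety to get right is the reading of condition~2: it should be understood as the existence of a single additional vertex whose inclusion makes the induced subgraph connected (matching the ``disconnected up to one vertex'' usage in the proof of Theorem~\ref{theorem:pptt-in-p}), not as a uniqueness claim about that vertex. With the existence reading, fact (ii) immediately guarantees that a connecting vertex for $G$ remains one for $G'$, which is the crux; the only remaining care is the bookkeeping that a drop in component count may move an instance from condition~2 into condition~1 rather than keeping it in condition~2. No Steiner-tree construction or optimality argument is needed---everything follows from the monotonic behaviour of components and connectivity under edge insertion.
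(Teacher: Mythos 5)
Your proof is correct and takes essentially the same route as the paper's: both arguments rest on the elementary facts that adding an edge cannot increase the number of connected components of the induced subgraph on $V'$ and preserves connectivity of $G[V'\cup\{v\}]$, with the only nontrivial case being that an instance satisfying condition~2 may turn into one satisfying condition~1 when the new edge merges components of $G[V']$. Your explicit split on whether both endpoints of the new edge lie in $V'$, and your remark that condition~2 must be read existentially rather than as a uniqueness claim (which matches how the paper itself uses the definition in Theorem~\ref{theorem:pptt-in-p}), merely make the same argument slightly more careful than the paper's version.
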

\begin{proof}
    Let $G'$ be the new graph formed by adding a new edge to $G$. It is enough to show that $G'[V']$ satisfies the same properties as $G[V']$ required by the definition of \ppttsteinertree. 
    
    If $G[V]$ has at most two connected components, then the same can be said for $G'[V']$. Furthermore, if there is a unique vertex $v$ s.t. $G[V' \cup \{v\}] $ is connected, adding an edge will either make $G'[V']$ a connected subgraph which makes the problem of the first type as introduced in the definition of \ppttsteinertree, or $G'[V' \cup \{v\}]$ is connected. 
\end{proof}
The practical implication of the theorem above is that demonstrating the polynomial complexity of finding a Steiner tree on the ternary tree implies its polynomial complexity on the hardware connectivity graph. 
It is essential to note that this doesn't guarantee the Steiner tree will be the same, as increasing the number of edges in the hardware connectivity graph may allow finding a smaller Steiner tree.

We conclude this section by demonstrating that, for any Pauli string representing the product of 2 or 4 Majoranas, solving the Steiner tree problem can be accomplished in polynomial time. 
The proof will be presented by establishing that the Steiner problem will conform to the form defined in \ppttsteinertree. We begin with the product of 2 Majoranas, which can be shown to consistently form a connected path on the ternary tree in qubit space.

\begin{theorem} \label{theorem:majorana-product-2}
    Let the hardware connectivity graph $G$ be a connected graph. Let there be an arbitrary PPTT F2Q defined through a ternary tree which is a subgraph of $G$. Then for any product of two strings generated by the PPTT mapping, the qubits on which the products act non-trivially induce a connected component on $G$.
\end{theorem}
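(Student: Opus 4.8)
The plan is to characterise the support of each Majorana string geometrically on the tree, and then track exactly which qubits survive in the product of two such strings. Recall that every string $S$ generated by the PPTT mapping is read off a root-to-leg path: each time the path crosses a link labelled $P$ emanating downward from node $u$, the single-qubit Pauli $P_u$ is appended. Consequently the set of qubits on which $S$ acts non-trivially is precisely the set of nodes on the path from the root to the node $w$ carrying the corresponding leg. First I would fix two strings $S_1, S_2$ whose legs hang off nodes $w_1$ and $w_2$, and let $v$ be the lowest common ancestor of $w_1$ and $w_2$ in the ternary tree.

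Next I would evaluate the product $S_1 S_2$ qubit by qubit, splitting the nodes into three regions. On the common ancestral path from the root to the parent of $v$, both root-to-leg paths traverse the same nodes via the same links, so $S_1$ and $S_2$ carry identical single-qubit Paulis; their product is the identity and these qubits leave the support. At the branching node $v$ the two paths depart along distinct links, and since the three links below any node receive the three distinct labels $X$, $Y$, $Z$, the two strings carry distinct non-identity Paulis on qubit $v$, whose product is again non-identity, so $v$ remains in the support. Finally, on each of the two descending tails — from the child of $v$ toward $w_1$ down to $w_1$, and symmetrically toward $w_2$ — only one of the two strings acts, so those qubits remain in the support as well.

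Assembling the three regions, I would conclude that the support of $S_1 S_2$ equals the vertex set of the unique tree path joining $w_1$ and $w_2$, namely the ascent $w_1 \to v$ glued to the descent $v \to w_2$; the degenerate case $w_1 = w_2$ collapses to the single vertex $\{w_1\}$. Because this vertex set is a path in the ternary tree it is connected, and as the tree is a subgraph of $G$ every edge of this path is an edge of $G$ between two support vertices; hence the induced subgraph on the support is connected in $G$, which is the assertion.

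The argument is conceptually light, so the main effort lies in the bookkeeping. I expect the only points needing genuine care to be the claim that the single-qubit Paulis coincide exactly on the shared prefix yet differ exactly at the branching node $v$ — which rests entirely on the convention that the three downward links at each node are labelled with distinct elements of $\{X, Y, Z\}$ — together with verifying that the degenerate configurations (both legs on one node, or one of $w_1, w_2$ an ancestor of the other) all reduce to the same tree-path description, so that no separate case analysis is ultimately required.
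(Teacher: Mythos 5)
Your proposal is correct and follows essentially the same argument as the paper's proof: the two strings agree (and hence cancel) on the shared prefix of their root-to-leg paths, survive with a non-identity Pauli at the divergence node because the three downward links at each node carry distinct labels, and survive on the two tails where only one string acts, so the support is the tree path between the two leg-carrying nodes and is therefore connected in $G$. Your lowest-common-ancestor framing merely unifies the paper's two explicit cases (divergence at the root versus divergence at a later node) into a single description.
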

\begin{proof}
    Let $S_1$ and $S_2$ be two Majorana strings (i.e., Pauli strings representing Majorana operators) generated by the PPTT mapping as described in the theorem statement. Each is constructed by taking a path from the root to a leg $(v_1 = \text{root}, \dots, v_k, l)$, where $v_k$ is the last node to which the leg $l$ is assigned. This path defines each Majorana string. For each node $v_i$ with the label $(j, u, c)$, we take the $v_{i+1}$-th Pauli label $P\in \{X,Y,Z\}$ and apply it to qubit $j$. The product over $v_1, \dots, v_k$ (for the last one, we take the leg's Pauli operator) defines the Pauli string.
    
    Now, suppose that $v_2$ is different for both Majorana strings, indicating that the strings \emph{diverge} at the root $v_1$. This implies that there is no shared ternary tree node for the strings except for the root. In addition, the Pauli operator acting on the root is different. Since for any two non-identity Paulis $P_1$ and $P_2$, we have $P_1P_2 \neq I$, the product of these two Majorana strings will be a path of non-identity Paulis acting on qubits from the leg of one string, through the root, to the leg of another string. Such a Pauli operator acts on qubits forming a connected path on the mapping's underlying tree, and hence a connected path on $G$.
    
    Note that if $v_2$ is the same for both strings, then there must be a node $v_m$ (possibly a leaf) in the path at which the strings diverge. In this case, the product of those strings does not include the root, as we act with the same Pauli on it, and on all the nodes up to $v_m$ exclusively. However, in this case, the strings form a connected path from one leg to another through $v_m$ to the leg of another string. In this scenario, we can also observe that the qubits form a connected (possibly one-node) path.
    
\end{proof}
Note that in the proof, we utilized the fact that if two Majorana strings diverge at any point, they form a connected component. 
This is because the two strings are identical up to the point of divergence and cancel out when multiplied together. 
This fact will be frequently used in the following theorem for the product of 4 Majorana strings. 
We adopt the notation $A \propto B$ if $A = cB$ for some complex number $c$.

\begin{theorem}\label{theorem:majorana-product-4}
    Let the hardware connectivity graph $G=(V,E)$ be a connected graph. 
    Assume there is an arbitrary PPTT mapping defined on a ternary tree that is a subgraph of $G$. 
    Let $S$ be the product of any 4 Majorana strings. 
    Then, finding the Steiner tree over $G$ for qubits on which $S$ acts non-trivially can be done in polynomial time.
\end{theorem}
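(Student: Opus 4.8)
The plan is to reduce everything to the ternary tree $T$ and then invoke the two preceding theorems. Since $T$ is a subgraph of $G$, the graph $G$ is obtained from $T$ by adding edges, so by Theorem~\ref{theorem:monotonicity-ppttsteinertree} it suffices to show that the terminal set $V'=\mathrm{supp}(S)$, i.e.\ the qubits on which $S$ acts non-trivially, is an instance of the {\ppttsteinertree} \emph{on $T$ itself}. Theorem~\ref{theorem:pptt-in-p} then produces the Steiner tree over $G$ in polynomial time. Hence the entire task is to prove that $T[V']$ has at most two connected components (the first alternative in the definition), or is disconnected with a unique $v\notin V'$ whose addition reconnects it (the second alternative).

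To control $V'$, I would use the fact (already exploited in Theorem~\ref{theorem:majorana-product-2}) that each Majorana string is read off a root-to-leg path and acts on a node $u$ by the Pauli labelling the link along which its path leaves $u$. This gives a local rule: if $n_X,n_Y,n_Z$ count how many of the four strings leave $u$ along its $X$-, $Y$- and $Z$-labelled links, then $S$ acts on $u$ as $X^{n_X}Y^{n_Y}Z^{n_Z}$ up to phase, which is trivial precisely when $(n_X,n_Y,n_Z)\bmod 2\in\{(0,0,0),(1,1,1)\}$. An immediate corollary is the cancellation already seen for two strings: along any maximal segment of $T$ traversed by a fixed set of $k$ strings, every node carries the same Pauli to the $k$-th power, so the whole segment lies inside $V'$ when $k$ is odd and outside when $k$ is even.

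I would then organise the four root-paths by how they branch. Because each node offers at most three labelled links, the strings can first separate only as a $(3,1)$, $(2,2)$ or $(2,1,1)$ split, and each resulting bundle can split only in the same restricted ways, so the branch topology is a short finite list. Feeding the parity rule through this list shows that the balanced (support-free) branch nodes available to four strings are exactly the $(2,2)$ split, which separates only two bundles, and the $(1,1,1)$ ternary split, where $XYZ\propto\mathbbm 1$ and three strings part ways. Every topology therefore yields at most two components of $T[V']$, giving the first alternative, with one exception: when three strings meet at a common node and undergo a $(1,1,1)$ split, the support consists of the upstream component together with the three downstream one-string branches, and adjoining exactly that ternary-split vertex reconnects all of them, giving the second alternative. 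Coinciding strings or empty segments only merge components, reducing to Theorem~\ref{theorem:majorana-product-2}.

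The main obstacle is the middle step: ruling out three or more irreparably separated components. The decisive counting fact is that a $(1,1,1)$ split consumes three incoming strings, so it cannot coexist with a second balanced branch, and a $(2,2)$ split produces only two bundles; hence at most one ``bad'' branch node ever appears, while every other gap is an even segment that merely widens the boundary between two components instead of creating a new one. Carrying this out exhaustively --- and checking that in the $(1,1,1)$ case the reconnecting vertex is the unique such vertex in $T$, as the second alternative requires --- is the delicate part, after which the passage to $G$ through Theorems~\ref{theorem:monotonicity-ppttsteinertree} and~\ref{theorem:pptt-in-p} is immediate.
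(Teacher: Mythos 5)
Your proposal is correct and follows essentially the same route as the paper's proof: a case analysis on how the four root-to-leg paths branch, using the Pauli-parity cancellation rule to show that the only support-free branch points are the $(2,2)$ split and the ternary $(1,1,1)$ split (where $XYZ\propto \mathbbm{1}$), so that $T[V']$ falls into one of the two alternatives of the \textsc{PPTT F2Q Steiner Tree Problem}, after which Theorems~\ref{theorem:monotonicity-ppttsteinertree} and~\ref{theorem:pptt-in-p} finish the argument. The paper organises the same enumeration by the directions in which the strings leave the root (and defers the "all leave together" case to a later diverging node), but the underlying argument and the identification of the $(1,1,1)$ split as the case requiring the second alternative are identical to yours.
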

\begin{proof}
    The proof will proceed by showing that finding a Steiner tree on the ternary tree is a \ppttsteinertree.

    We will prove the statement by considering several scenarios in which Majorana strings will diverge at different nodes and in various combinations of directions.
    \begin{enumerate}
        \item \textbf{Majorana strings leave the root in all $X$, $Y$, $Z$ directions:} In this case, the root is a terminal as $PPP'P''\propto P$ for any pairwise different non-identity $P,P',P''$. With this example, all the nodes that are on the paths of Majorana strings leaving into $P'$ and $P''$ directions form a connected path that passes the root and are terminals. On the other hand, for Majorana strings leaving into $P$ direction, the case reduces to what was observed for the product of two Majorana strings analyzed in Theorem~\ref{theorem:majorana-product-2}. This means that all the nodes up to a diverging node (excluding the diverging node and the root) will not be terminals, and all the other nodes from these two Majorana strings will be terminals. Eventually, we will obtain a single connected component if the Majorana strings diverge at the $P$-children of the root, or they will form independent connected components if they diverge later, reducing the problem to \ppttsteinertree.
        \item \textbf{Majorana strings leave the root in pairs in two directions:} In this case, the root is not included as $PPP'P' = I$ for any Paulis $P,P'$. Two pairs of Majorana strings will always form two disconnected components, following the reasoning in Theorem~\ref{theorem:majorana-product-2} for each pair independently, again reducing the problem to \ppttsteinertree.
        \item \textbf{Three Majorana strings leave the root in one direction, and the 4th one leaves in another direction:} In this case, the root is always a terminal, as $PPPP' \not\propto I$ for any $P,P'$ different than identity. Similarly, all the nodes on the path before the next diverging point for the three Majorana strings will be terminals until they diverge. At the diverging qubit, one of the two possibilities can occur:
        \begin{enumerate}
            \item \textbf{All Majorana strings leave in different directions:} The diverging node is not a terminal as $XYZ \propto I$, however, all other nodes are terminals, forming in total 4 connected components including the one already created with the root. In this case, adding the diverging node will make the induced graph connected, which reduces the problem to \ppttsteinertree.
            \item \textbf{Two Majorana strings leave in one direction, and the 3rd Majorana string leaves in another direction:} In this case, the diverging node is a terminal as $PPP' = P'$ for any non-identity $P, P'$, and so are the nodes from the 3rd Majorana string. The two Majorana strings going in the same direction might form the second connected component if they do not diverge at the children of the diverging node. Since eventually we have at most two connected components, the problem reduces to \ppttsteinertree.
        \end{enumerate}
        \item \textbf{All Majorana strings leave in the same direction:} In this case, neither the root nor the nodes on the path before the first diverging node are terminals as $P^4 = I$ for any Pauli $P$. Then, the analysis reduces to one of the cases above except instead of the root, the diverging node is a starting point.
    \end{enumerate}
    As shown in the analysis above, in all cases, the problem reduces to \ppttsteinertree. By Theorem~\ref{theorem:monotonicity-ppttsteinertree}, the result generalizes to a hardware connectivity graph $G$ after adding missing nodes from $G$ that are not in the ternary tree, allowing finding the Steiner tree in polynomial time thanks to Theorem~\ref{theorem:pptt-in-p}.
\end{proof}

The two last theorems above can be concluded with the following lemma.
\begin{theorem}
    Let $S$ be a Pauli string, representing a product of 2 or 4 Majorana strings defined over $k$ qubits, coming from the PPTT F2Q mapping with the ternary tree as a subgraph of the hardware connectivity graph. One can find an optimal Steiner tree with $n \geq k$ nodes in the hardware connectivity graph in polynomial time, which, in turn, allows the implementation of $\exp(-itS)$ for any real $t$ with only $2(2n-k-1)$ CNOTs.
\end{theorem}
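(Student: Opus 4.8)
The plan is to treat this final statement as the synthesis of the three structural results already established, so the proof reduces to assembling them and reading off the gate count. First I would recall the generic circuit-synthesis routine stated at the opening of this appendix: for any Pauli string $S$ whose support $V_P$ has $k = |V_P|$ qubits, once a Steiner tree $T=(V_T,E_T)$ of $V_P$ over the hardware graph $G$ is in hand, the single-qubit basis change into $\{I,Z\}$ followed by the iterative leaf-elimination procedure of~\cite{Vandaele2022} implements $\exp(-\mathrm i t S)$ using exactly $2(2|V_T| - |V_P| - 1)$ CNOTs. Writing $n = |V_T|$, this is the claimed $2(2n-k-1)$, and $n \geq k$ holds trivially because a Steiner tree contains all of its terminals. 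Thus the only thing left to justify is that an \emph{optimal} (minimum-vertex) Steiner tree can be produced in polynomial time.

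For that I would split on whether $S$ is a product of two or four Majorana strings. If $S$ is a product of two Majorana strings, Theorem~\ref{theorem:majorana-product-2} shows that $V_P$ already induces a connected subgraph of the ternary tree, so on the tree the instance is of the first type in the definition of \ppttsteinertree and a spanning tree suffices. If $S$ is a product of four Majorana strings, Theorem~\ref{theorem:majorana-product-4} shows, through the exhaustive case analysis on the divergence pattern of the four root-to-leg paths, that the induced instance on the ternary tree is again a \ppttsteinertree instance, i.e. it has at most two connected components or is disconnected up to a single extra vertex.

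The transfer from the ternary tree to the actual hardware graph is then handled by monotonicity: since the ternary tree is by hypothesis a subgraph of $G$, the graph $G$ is obtained from the tree by adding edges, and Theorem~\ref{theorem:monotonicity-ppttsteinertree} guarantees that the instance remains a \ppttsteinertree instance on $G$. Theorem~\ref{theorem:pptt-in-p} then yields a polynomial-time algorithm returning a genuinely optimal Steiner tree over $G$, which is exactly the object required to minimise $n$, and hence the CNOT count, in the synthesis routine. Feeding this tree into the count of the first paragraph closes the argument.

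The main obstacle is essentially absent at this stage, because the combinatorial heart of the argument has already been discharged in Theorems~\ref{theorem:majorana-product-2} and~\ref{theorem:majorana-product-4}. The only point requiring care is conceptual rather than computational: one must not conflate optimality over the ternary tree with optimality over $G$. Because extra hardware edges can admit a strictly smaller Steiner tree, it is precisely the monotonicity argument (Theorem~\ref{theorem:monotonicity-ppttsteinertree}) that licenses running the polynomial-time solver of Theorem~\ref{theorem:pptt-in-p} directly on $G$ while still obtaining the true optimum; this is the step I would flag most explicitly when writing out the formal proof.
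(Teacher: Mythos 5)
Your proposal is correct and follows essentially the same route as the paper: the paper's own proof simply cites Theorems~\ref{theorem:majorana-product-2} and~\ref{theorem:majorana-product-4} for the polynomial-time Steiner tree and~\cite{Vandaele2022} for the CNOT count, which is exactly the assembly you carry out in more detail. Your explicit remark that optimality on the ternary tree must be transferred to $G$ via Theorem~\ref{theorem:monotonicity-ppttsteinertree} is a point the paper also flags after that theorem, so nothing is missing or superfluous.
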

\begin{proof}
    The fact that the Steiner tree can be found in polynomial time comes directly from Theorems~\ref{theorem:majorana-product-2} and \ref{theorem:majorana-product-4}. The number of CNOTs comes from the implementation proposed in~\cite{Vandaele2022}.
\end{proof}

\section{Reachability of PPTT F2Q mappings} 
\label{sec:reachability}
In this section, we analyze whether the PPTT F2Q transformations, as presented in Sec.~\ref{sec:methodology}, allow transforming any hardware connectivity-preserving PPTT F2Q to any other hardware connectivity-preserving PPTT F2Q. Starting from now, we will assume that we are given an undirected graph $G=(V,E)$ representing the quantum hardware connectivity, and the ordered ternary tree (OTT) used for any considered PPTT F2Q is a subgraph of $G$ without explicitly stating it.

For general graphs, it can be shown that transforming one F2Q mapping into another is not always possible with the transformations from Sec.~\ref{sec:methodology}. Consider a full OTT. For sufficiently many nodes (at least 16), there is a node $v$ that has both a parent and all three children, none of which are leaves. Now, suppose that the graph $G$ is the underlying ternary tree, and to $v$ we attach a long path tree only. The aforementioned F2Q mapping, as well as, for example, the JW mapping defined on this path graph, are hardware-connectivity preserving PPTT F2Q mappings. However, transforming the former to the latter would require creating a tree with one node connected to 5 nodes at once, as shown in Fig.~\ref{fig:reachability-counterexample}. Since our PPTT F2Q mappings require the tree to be a ternary ordered tree, such a move is not allowed. Note that if we relax this restriction that the OTT must be a subgraph of $G$, then of course we can reach any PPTT mapping.

\begin{figure}[t]
    \centering
    \includegraphics[width=0.42\linewidth]{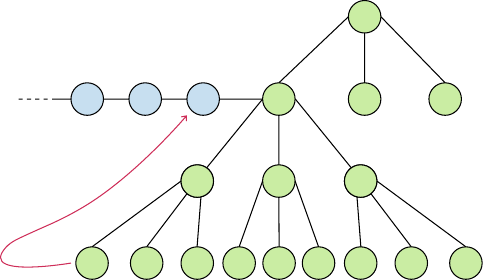}
    \caption{An example of the ternary tree (green nodes) which cannot produce any other ternary tree. The only node to which any leaf could be attached already has degree 4 in the tree, and degree 5 is not allowed}
    \label{fig:reachability-counterexample}
\end{figure}

On the other hand, it is possible to provide sufficient conditions that allow showing that any PPTT mapping can be generated from any other on a sufficiently large heavy-hexagonal and two-dimensional grid graph. We dedicate the rest of the section to proving this fact. We start with the following lemma, which reduces the complexity of changing between F2Q mappings, as long as the rules allow changing any subtrees with at most degree 3. For this, we introduce the term ``underlying tree." Let $H$ be the OTT used for a particular PPTT F2Q. The underlying tree of $H$ is a simple graph (without a root pointed) such that all the arcs in the OTT are replaced with edges. Note that this is a proper subgraph of $G$ and a tree. The underlying tree of an F2Q mapping is the underlying tree of the OTT used in the mapping.

\begin{lemma} \label{lemma:ignore-all-but-tree}
Let $G=(V,E)$ be an undirected connected graph, and let there be two PPTT F2Q mappings $F_1,F_2$ with underlying trees $T_1, T_2$. Assuming that moving leaves allows transforming $T_1$ into $T_2$, one can transform $F_1$ into $F_2$ with the steps introduced in Sec.~\ref{sec:methodology}.
\end{lemma}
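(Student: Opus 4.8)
The plan is to prove the statement as a \emph{reduction}: once the underlying trees agree, the remaining data distinguishing two PPTT F2Q mappings are exactly the four decorations listed in Sec.~\ref{sec:altering} (root, link Pauli labelling, mode-to-node association, and braiding flags), and each of these has a dedicated transformation. I would argue by induction on the number $k$ of leaf moves needed to carry $T_1$ into $T_2$. In the inductive step I lift one underlying-tree leaf move to the F2Q \emph{leaf move} of Sec.~\ref{sec:methodology}. The only mismatch is that the F2Q move detaches a \emph{terminal node} (an OTT leaf with three free legs), whereas the underlying-tree move may wish to relocate a degree-one vertex that is currently the OTT root; in that case I first apply a \emph{root change} to push the root onto a neighbour, after which the vertex is an ordinary leaf and the F2Q leaf move applies verbatim. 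Replaying the $k$ moves this way produces a mapping whose underlying tree is $T_2$, reducing everything to the base case.

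The base case $T_1=T_2$ is the heart of the argument, and here I fix the decorations in a fixed order so that later steps do not disturb earlier ones. First, a single \emph{root change} aligns the root with that of $F_2$: this is always legal because the target root has degree at most three in the common tree, so before it is re-rooted it still carries a parent edge and thus has out-degree at most two, meeting the constraint of the transformation. Next I traverse the internal nodes and use \emph{Pauli shuffles} to match the labelling of the three links at each of them. Then, since a \emph{mode association swap} acts as a transposition of Fermionic indices across two nodes and transpositions generate the full symmetric group, I realise the permutation carrying the current mode-to-node assignment onto that of $F_2$. Finally, \emph{Majorana braiding changes} flip the sign flag at a single node, so applying them node by node matches all braiding flags. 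Each move keeps us inside the product- and vacuum-preserving class, as recorded in Sec.~\ref{sec:altering}, so the composite transformation is legal throughout.

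The step I expect to be the main obstacle is the Pauli labelling at genuine leaf nodes, since \emph{Pauli shuffle} is defined only for nodes of out-degree at least one and therefore cannot directly reset the three leg labels of a leaf. Exchanging the two legs that constitute a leaf's mode pair is already achieved by a braiding change, but a relabelling that swaps the descent ($Z$) leg with an $X$- or $Y$-leg changes which Majorana string is paired into that leaf's mode rather than into an ancestor's mode, and this is a genuinely different mapping, not reachable by braiding or mode swaps alone. To close this gap I would temporarily convert such a leaf into an internal node — borrowing an adjacent vertex through an auxiliary leaf move so that the node acquires out-degree one — perform the now-admissible Pauli shuffle, and then undo the auxiliary move; equally, one can absorb these residual relabellings into the hypothesised sequence of leaf moves by choosing, at each reattachment, which labelled free leg of the host node becomes the new edge. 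Checking that such auxiliary manoeuvres are always available, which ultimately draws on the connectivity of $G$, and that they compose to the intended labelling, is the part demanding the most care.
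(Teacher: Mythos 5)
Your overall strategy coincides with the paper's: for a fixed underlying tree, mode-association swaps are transpositions and hence generate every bijection between Fermionic modes and nodes, braiding flips act independently node by node, Pauli shuffles reassign link labels, and a root change repositions the root (your degree count showing that the target root, as a non-root of the common tree, always has out-degree at most two is correct). The paper's proof is exactly this observation; the lifting of underlying-tree leaf moves to admissible F2Q leaf moves is deferred to the following lemma (Lemma~\ref{lemma:move-one-node}), which also covers a case you omit, namely reattaching a leaf to a vertex that is currently the root and already has out-degree three, which likewise requires a prior root change.

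However, the obstacle you single out as the heart of the difficulty --- matching the Pauli labels on the legs of a leaf node --- is illusory, and the auxiliary manoeuvre you build to overcome it is unnecessary (and is the one place where your argument is left incomplete). A PPTT mapping assigns to each node $u$ the pair of strings $S_{s_x^{(u)}}, S_{s_y^{(u)}}$ of Eq.~\eqref{eq:PPTT-majoranas}; these are identified by their Pauli content (the $X$- and $Y$-descent strings), not by the geometric position of the legs. For a leaf $u$ the three legs contribute the set $\{G_uX_u,\, G_uY_u,\, G_uZ_u\}$ no matter how the three labels are distributed over the three positions; the pair assigned to $u$ is always $\{G_uX_u, G_uY_u\}$, and $G_uZ_u$ is always the string absorbed into an ancestor's $Z$-descent (or the discarded all-$Z$ string). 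Your claim that swapping the $Z$-leg with an $X$- or $Y$-leg ``changes which Majorana string is paired into that leaf's mode'' is therefore false: two mappings differing only in the position-to-label bijection at a leaf are the \emph{same} mapping, so there is nothing to match there. The same remark disposes of the two free legs of an internal node of out-degree one or two, where only the labels placed on the actual edges (reachable by Pauli shuffle) carry information. Deleting this spurious step, together with the connectivity-dependent borrowing trick it requires, leaves a correct proof that matches the paper's.
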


\begin{proof}
First, let us note that for any fixed OTT, swapping modes associated with any two nodes allows the production of any mode association. Similarly, changing braiding can also be done independently of the other changes to the PPTT F2Q mappings. Moreover, children for any node can be reassigned with a new Pauli without changing the underlying tree. Finally, since one can change any root to any other node that can be a root, the only issue with changing one F2Q to another is with changing the underlying tree.
\end{proof}

Note that the lemma almost allows us to focus solely on the underlying trees. However, at this moment, it remains unclear if we can move leaves freely in the underlying tree, as we may accidentally connect the 4-th node to the root of the tree. Fortunately, by changing the root, we don't have to worry, as one can attach a node `to the root'.

\begin{lemma} \label{lemma:move-one-node}
    Let $T_1$ be an underlying tree of OTT, and let $T_2$ be another underlying tree of OTT that differs from $T_1$ by reattaching the leaf. Then it is possible with the steps introduced in Sec.~\ref{sec:methodology} to reach PPTT F2Q with underlying tree $T_2$ out of PPTT F2Q with underlying tree $T_1$.  
\end{lemma}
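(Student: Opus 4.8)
The plan is to realize the single underlying-tree leaf reattachment directly with one \emph{leaf move}, after first using a \emph{root change} to put the ordered ternary tree (OTT) into a configuration in which that move is legal. Denote by $\ell$ the leaf that is moved, so that $\ell$ is adjacent to some node $a$ in $T_1$ and to a different node $b$ in $T_2$, while $T_1$ and $T_2$ agree on every other edge; here $b \neq a$ is exactly the content of ``reattaching the leaf.'' Since $T_2$ is an OTT underlying tree and $b$ gains the edge to $\ell$ in $T_2$, its degree there is at most $4$, so $\deg_{T_1}(b) \le 3$.

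First I would recall the crucial bookkeeping fact that in a PPTT OTT every node carries exactly three outward links (child-edges plus legs), while the edge to its parent is an \emph{inward} link not counted among these three. Consequently a node of underlying degree $3$ has a free leg precisely when it is \emph{not} the root (then it has one parent, two children, one leg), and a leaf ($\deg = 1$) is a ``terminal node with three legs'' — the object the leaf move acts on — precisely when it is not the root. This observation defuses the ``fourth node attached to the root'' worry raised before the lemma.

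Second, I would choose the root appropriately. The leaf move requires (i) $\ell$ to be a childless terminal, (ii) the target $w=b$ to possess a free leg, and (iii) $w=b$ to be neither $\ell$ nor the parent of $\ell$. Condition (iii) is immediate: once $\ell$ is a non-root leaf its only neighbour $a$ is its parent, and $b \neq a$, $b \neq \ell$ by hypothesis. For (i) and (ii) I would re-root as follows. A tree on $\ge 2$ vertices has at least two leaves, so there is a leaf $r \neq \ell$; and when $\deg_{T_1}(b)=3$ the node $b$ is not a leaf, hence automatically $r \neq b$. A non-root leaf has out-degree $0 \le 2$, so it is an admissible target of the single root-change operation, which re-roots in one step to any node of out-degree at most $2$ and leaves all intermediate out-degrees unchanged (along the reversed path each node trades one child for its former parent). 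Re-rooting at $r$ makes $\ell$ a non-root terminal, settling (i); it makes $b$ non-root, hence endowed with a free leg when $\deg_{T_1}(b)=3$, while if $\deg_{T_1}(b)\le 2$ then $b$ has a free leg in any orientation, so (ii) holds in all cases.

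With the OTT in this configuration, a single leaf move detaches $\ell$ from $a$ and attaches it to the free leg of $b$, turning the underlying tree from $T_1$ into $T_2$. Because both $T_1$ and $T_2$ are subgraphs of $G$, the edge $\{\ell,b\}$ lies in $E(G)$, so the move also respects the connectivity-preserving restriction. This already produces a PPTT F2Q mapping whose underlying tree is $T_2$, as required; any remaining adjustment of root, Pauli labels, mode association, or braiding is free by the reasoning of Lemma~\ref{lemma:ignore-all-but-tree}. The only genuinely delicate point — the one I would write out most carefully — is the root bookkeeping of the previous paragraph: verifying that the needed root change is itself a legal operation and that afterwards $\ell$ and $b$ simultaneously satisfy (i) and (ii). The degree-counting argument above is what makes this go through, and the remaining steps are routine.
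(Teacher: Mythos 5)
Your proof is correct and follows essentially the same route as the paper's: re-root the OTT (at a leaf other than the moved one, which exists since every tree has at least two degree-one nodes) so that neither the moved leaf nor the target node is the root, perform the single leaf move, and delegate all remaining bookkeeping to Lemma~\ref{lemma:ignore-all-but-tree}. Your version is in fact somewhat more explicit than the paper's about why the re-rooting simultaneously makes $\ell$ a three-leg terminal and guarantees $b$ a free leg, but the underlying argument is the same.
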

\begin{proof}
Note that the only problem that arises might be if we would assign the moved leaf to a root and make it a degree-4 node. However, such a node can't be a root for $T_2$, so before moving the leaf we simply have to move the root to any node which is not the moved leaf. Fortunately, one can show that for any tree, one can find at least two nodes with degree 1, as otherwise we contradict the degree sum formula:
\begin{equation}
    \sum_v \text{degree}(v) \geq 2(n-2) + 1 > 2(n-1) = 2|E|.
\end{equation}
In light of Lemma~\ref{lemma:ignore-all-but-tree} which allows us to change mode association, braiding and root position arbitrarily, we have proved the statement of the lemma.
\end{proof}

The lemmas above allow us to think about the reachability of any PPTT F2Q mapping from any other PPTT F2Q only in terms of underlying graphs, without even being concerned about the root position. 
Before showing sufficient conditions for reaching any PPTT F2Q mapping, let's introduce a definition that will turn out to be useful for demonstrating how to construct an arbitrary underlying tree on particular hardware connectivity graphs.

\begin{definition}
    Let $G=(V,E)$ be arbitrary undirected connected graphs and let $V' \subset V$. Let $W\subset V$ s.t. $G[W]$ and $G[V\setminus W]$ are connected, and $W \cap V' \neq \emptyset$. We call sequence of sets $(V_1,\dots,V_k)$ a ping-pong partition of the $V'$ over $(G,W)$ if simultaneously
    \begin{enumerate}
        \item $\{V_i\}_i$ is partition of $V'$,
        \item $V_i \subseteq W$ for odd $i$ and $V_i \subseteq V\setminus W$ for even $i$,
        \item $G[\bigcup_{i=1}^{k'} V_i]$ is connected for any $1 \leq k' \leq k$.
    \end{enumerate}
\end{definition}
Note that \textit{3.} implies that $G[V']$ is connected. A particular algorithm for generating one goes as follows: first, we choose an arbitrary vertex $v\in V' \cap W$ and find the maximum number of nodes in the induced graph inside $G[W]$. The vertex set of this induced graph is our $V_1$. Then we find the maximum set of nodes $V_2\subseteq V\setminus W$ such that $G[V_1 \cup V_2]$ is connected. We repeat the process, constructing consecutive $V_1,\dots, V_k$ until they form a partition of $V'$.

Equipped with such definitions, we are ready to prove the main theorem of this section.
\begin{theorem} \label{theorem:reachability}
    Let $G=(V,E)$ be a simple connected graph and suppose we are given a $n$-mode PPTT F2Q mapping defined over $G$. Then one can create any other PPTT F2Q mapping provided the sufficient conditions on $G$ hold:
\begin{enumerate}
    \item the maximum degree of $G$ is 4,
    \item there exists $W\subset V$ s.t. $G[W]$ and $G[V\setminus W]$ are connected and $|W|, |V\setminus W| \geq n$.
\end{enumerate}
\end{theorem}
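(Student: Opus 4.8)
The plan is to strip away all the ``decoration'' of a PPTT F2Q mapping, reduce the claim to a purely graph-theoretic reachability statement about the underlying trees, and then settle that statement by a reversibility argument anchored at a canonical tree living inside $W$. First I would invoke Lemma~\ref{lemma:ignore-all-but-tree}: once the underlying tree $T_1$ of the given mapping can be turned into the underlying tree $T_2$ of the target mapping by leaf moves, the mode association, the Majorana braiding, the Pauli labelling of children, and the root position can all be adjusted independently. So it suffices to show that any $n$-node subtree $T_1$ of $G$ can be transformed into any other $n$-node subtree $T_2$ of $G$ by a sequence of single-leaf reattachments (Lemma~\ref{lemma:move-one-node}), with every intermediate graph still a legal underlying tree. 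This last proviso is exactly what condition~1 buys us: since every subtree of a max-degree-$4$ graph has maximum degree at most $4$, it can always be rooted at a degree-$\leq 3$ vertex (a leaf, which every tree has) as an ordered ternary tree, and no leaf move can ever manufacture a forbidden degree-$5$ vertex. Thus condition~1 is used solely to keep us inside the ternary class throughout.

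Next I would exploit that a leaf move is its own inverse --- reattaching the relocated leaf to its former parent, whose slot and old qubit were just vacated, is again a valid leaf move --- so reachability is a symmetric relation. It is therefore enough to drive every valid $n$-node subtree to a single canonical tree $T^*$ and then read off $T_1 \to T^* \to T_2$ by reversing the second leg. I would take $T^*$ to be any $n$-node subtree of $G[W]$, which exists because $G[W]$ is connected with $|W|\geq n$ (truncate a spanning tree of $G[W]$).

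The core step is then to show that an arbitrary tree can be walked entirely into $W$. Viewing the $n$ tree nodes as tokens sitting on vertices of $G$, a leaf move slides a leaf token onto an empty vertex adjacent to the current tree, and the size hypotheses supply the slack the sliding argument consumes: whenever some token lies outside $W$ there is a free vertex inside $W$ (as $|W|\geq n$), and $V\setminus W$ always offers $\geq n$ empty scratch vertices for parking tokens temporarily. This is where the ping-pong partition enters: feeding the vertex set $V' = V(T^*)\subseteq W$ into a ping-pong partition $(V_1,\dots,V_k)$ over $(G,W)$ and assembling $T^*$ group by group, property~3 of the partition keeps every prefix $G[\bigcup_{i\leq k'}V_i]$ connected, so the partially assembled canonical tree stays a connected subtree that always exposes a boundary vertex to grow from, while the $W/(V\setminus W)$ alternation lets any obstructing token be stashed on the currently free side. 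The greedy procedure described after the definition certifies that such a partition exists.

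I expect the main obstacle to be precisely the collision/deadlock bookkeeping in this last step: proving that grow-and-retract leaf moves can always make net progress without ever trapping a token that has no empty adjacent vertex, and without being forced past three children at an attachment point. Both hypotheses are indispensable exactly here --- the $\geq n$ bounds on $|W|$ and $|V\setminus W|$ guarantee the free room, and the degree-$4$ bound guarantees ternary validity --- and I would discharge the argument by induction on the number of tokens not yet in their canonical $T^*$ positions, using the ping-pong connectivity to produce, at each stage, a single leaf whose relocation strictly increases the count of correctly placed tokens.
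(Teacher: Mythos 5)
Your reduction to underlying trees via Lemma~\ref{lemma:ignore-all-but-tree} and Lemma~\ref{lemma:move-one-node}, the observation that leaf moves are reversible, and the idea of routing through a ``hub'' tree confined to one side of the partition all match the paper's strategy. The gap is in the core construction, which you yourself flag as the main obstacle and then do not carry out. Two concrete problems. First, you apply the ping-pong partition to $V'=V(T^*)\subseteq W$; but by item~2 of the definition every even-indexed block must lie in $V\setminus W$, so for a set contained entirely in $W$ the partition degenerates to $k=1$, and the $W$/$(V\setminus W)$ alternation you invoke to ``stash obstructing tokens on the currently free side'' never materializes. The paper instead applies the ping-pong partition to $V_{T_2}$, the vertex set of the \emph{arbitrary target} tree, which genuinely straddles both sides; the alternation is then what guarantees that the side about to receive the next block is entirely empty, because the whole intermediate tree has first been consolidated on the opposite side (with $|W|,|V\setminus W|\geq n$ supplying the room to park the not-yet-placed nodes). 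Second, your plan ``walk everything into $W$, then rearrange into $T^*$'' breaks when $|W|=n$: the tree then fills $W$ completely, no leaf move internal to $W$ is possible, and the rearrangement must spill back into $V\setminus W$ --- exactly the alternation your degenerate partition does not provide.

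Your proposed induction measure --- that at each stage a single leaf move strictly increases the number of correctly placed tokens --- is also not something the construction can deliver: the argument necessarily parks nodes at arbitrary temporary positions (vertices not in $V(T^*)$ at all), so progress is not monotone in that count. You also do not address how the first leaf move finds an attachment point on the far side; the paper needs a preliminary step walking $T_1$ along a path of unoccupied vertices until it touches the border of $W\cap V_{T_2}$ before any block can be assembled. To repair your argument, prove the directed statement that a tree contained entirely in one side can be transformed into an arbitrary target $T_2$ using a ping-pong partition of $V_{T_2}$, and obtain the general case by reversing one leg and chaining, as the paper does.
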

\begin{proof}
    Let there be two PPTT mappings with underlying trees \(T_1, T_2=(V_{T_2},E_{T_2})\). In light of Lemma~\ref{lemma:ignore-all-but-tree} and Lemma~\ref{lemma:move-one-node}, it is enough to show that by moving leaves such that all intermediate graphs are subtrees, we can transform tree \(T_1\) into \(T_2\). Note that using the fact that the maximum degree of nodes in \(G\) is 4, we can never produce a tree with a maximum degree of 5 or more. Thus, we can always find a valid PPTT mapping for each of them.
    
    Without loss of generality, we will assume that \(T_1\) has nodes in \(V\setminus W\) only, and \(T_2\) has nodes in both \(V\setminus W\) and \(W\). This is because if we can transform \(T_1\) into \(T_2\), we can also transform \(T_2\) to \(T_1\). Additionally, the same conclusions by symmetry can be done by swapping \(V\setminus W\) and \(W\). Finally, such transformations can be chained to eventually allow the transformation of any tree to any other tree.
    
    Furthermore, we will assume that \(T_1\) has a node that has a neighbour in \(W\cap V_{T_2}\), a so-called element of the border of \(G[V\setminus W]\). Otherwise, we could transform the tree so that it will satisfy this assumption as follows. First, we look for a path over nodes from \(V\setminus W\) that connects a particular node in \(T_1\) with any node \(v\) from the border such that intermediate nodes are not in \(T_1\). Finally, we iteratively move leaves from the consecutively generated trees and add them along the path, up to \(v\) inclusive. This way we can transform \(T_1\) appropriately.
    
    Now let \((V_1,\ldots V_k)\) be a ping-pong partition of \(V_{T_2}\) over \((G,W)\). Given \(T_1\), we create a tree \(T^{(1)}\) by moving leaves from \(T_1\) to \(V_1\). Note that since the border element \(v\) is a neighbouring element of \(W \cap T_{V_2}\), each leaf moved from \(T_1\) can be already assigned to some element from \(V_1\), connecting them according to \(T_2\). This way, after moving \(|V_1|\) nodes, we have a subtree with all the nodes from \(V_1\) and edges as in \(T_2[V_1]\). Finally, we move all the remaining nodes from \(T_1\) and assign them arbitrarily to \(V_1\) so that they will form a connected tree \(T^{(1)}\) with all nodes in \(W\) and \(T^{(1)}[V_1] = T_2[V_1]\). Note that since \(G[W]\) is a connected subgraph with \(n\) nodes, one can always find free nodes in \(W\setminus V_1\).
    
    The steps above are repeated for \(V_{k'}\) with \(k'=2,\dots,k\) with the following updates:
    \begin{enumerate}
        \item We are not moving nodes that are already in \(\bigcup_{i=1}^{k'-1} V_i\).
        \item The nodes from \(V_{k'}\) should be connected to \(\bigcup_{i=1}^{k'-1} V_i\) along the \(T_2\) structure.
    \end{enumerate}
    Constructing a new \(T^{(k')}\). With such rules, for each \(k'\), we have \(T^{(k')}[\bigcup_{i=1}^{k'} V_i] = T_2[\bigcup_{i=1}^{k'} V_i]\), which eventually will produce \(T_2\) for \(k'=k\). The visualization of the process described in the last two paragraphs can be found in Fig.~\ref{fig:reachability-example}.
     
\end{proof}

\begin{figure}
    \centering
    \includegraphics[width=0.8\linewidth]{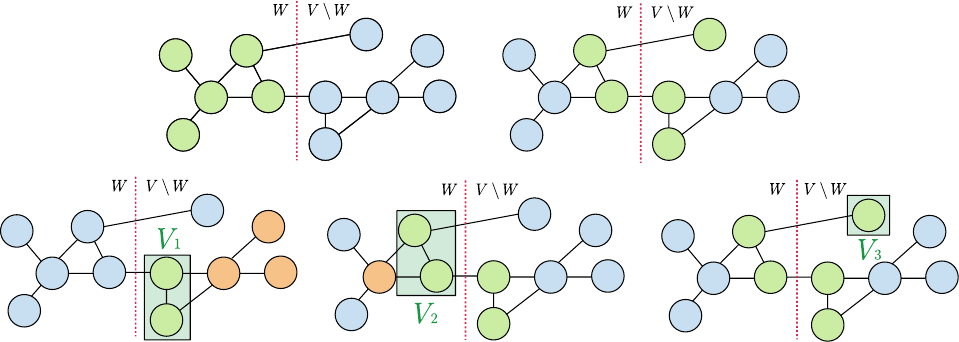}
    \caption{A visualization of the proof presented in in Theorem~\ref{theorem:reachability}. We start with the F2Q mapping which is fully on $W$ (top left), and we wish to obtain the F2Q mapping as on top right. This can be done in three steps as depicted below, where the green nodes are nodes that are already in the correction location in the tree, and the orange ones are yet-to-be-moved nodes.}
    \label{fig:reachability-example}
\end{figure}

Note it is rather easy to relax the conditions of the theorem. First, we don't need a partition into $W$ and $W\setminus V$ both having $n$ or more nodes as long as we can fit all the remaining nodes in the process. With similar arguments, we don't necessarily require $G[W]$ and $G[V\setminus W]$ to be connected as already depicted in Fig.~\ref{fig:reachability-example}. Finally, the condition on maximum degree could also be relaxed as long as we can guarantee that the intermediate trees will always have maximum degree 4.

However, this theorem is sufficient to show that reaching any PPTT F2Q mapping is possible on sufficiently large heavy-hexagonal and 2D grid graphs. For both classes, the maximum degree of nodes are 3 and 4 respectively, and since they are two-dimensional structures, it is easy to split them vertically or horizontally into two connected halves, see Fig.~\ref{fig:connectivity graphs} for the arrangement of qubits.

\section{Estimates on the number of PPTT F2Q mappings}
\label{sec:number-of-f2q}

In this section, we estimate the number of possible PPTT F2Q mappings for a particular hardware connectivity graph \(G\). Two scenarios will be considered: in the first, we will assume \(G\) is a complete graph; in the second, we will assume the maximum degree is bounded by 3.

\paragraph{A complete graph \(G\)} Let \(n\) be the number of modes, and \(Q\) be the number of qubits. First, note that mode associations and braiding can be chosen arbitrarily, giving us \(2^n n!\) possibilities. Now, let's count the number of ternary trees available. Starting with a root, we can attach a new node to any of its 3 possible children. Then we can attach a new leaf in 5 possible ways, and so on. In total, the number of ordered ternary trees is at most \(\prod_{i=1}^n (2i-1) = (2n-1)!! = (2n)!/(2^{n} n!)\). Additionally, we can assign any of the separately chosen \(\binom{Q}{n}\) physical qubits and assign them to nodes in any of \(n!\) ways. Note that with the procedure above, some ternary trees can be constructed in more than one way. This gives us an upper bound on the number of PPTT F2Q mappings as

\begin{equation}
    2^n n! \binom{Q}{n} n! \frac{(2n)!}{2^{n} n!} = n! (2n)! \binom{Q}{n} \leq n! (2n+1)! \frac{Q^n}{n!} = Q^n (2n+1)! = 2^{n (\log Q + \log n - \log(e)) + \order{\log (n)}},
\end{equation}

where, at the end, we used Stirling's formula \(\log(n!) = n\log n - n \log e + \order{\log n}\). Note that for indistinguishable qubits, we can just choose \(Q=n\), which simplifies the formula to

\begin{equation}
    2^{n (2\log n - \log(e)) + \order{\log _2(n)}  }.
\end{equation}

\paragraph{A bounded-degree graph \(G\)} Here, we assume the maximum degree of the graph is \(d>2\). All the steps are as before, except for how many trees we can find. In the \(i\)-th step, instead of having \(2i-1\) possibilities of assigning a node and an arbitrary physical qubit to be attached, we have to consider only those qubits which are neighbouring.

Therefore, first, we choose a root among one of \(Q\) qubits. Then, for one of the 3 legs, one of \(d\) neighbours of the root is chosen. Since all the used nodes in the trees can have at most \(d-1\) neighbours, we can upper bound the number of possibilities a physical qubit can be attached as the \(i\)-th node for \(i = 3,\dots, n\) as \(3 (d-1) (i-1)\), where 3 comes from upper bounding the number of free legs for each node, \(d-1\) from the number of nodes that can be attached to each node in the tree, and \(i-1\) from the total number of nodes in the current tree. Therefore, we can construct at most
\begin{equation}\label{eq:num-trees-bounded-graphs}
    Q \cdot 3d \cdot \prod_{i=2}^{n} 3(d-1)(i-1)  = Q 3^n d (d-1)^{n-1} \prod_{i=1}^{n-1} i \leq Q 3^{n} d^n (n-1)!.
\end{equation}

Thus, the total number of PPTT F2Q mappings is
\begin{equation}
    Q 3^{n} d^n (n-1)! 2^n n! \leq Q (6d)^{n} (n!)^2 = 2^{n (2\log n + \log (6d/e^2))  + \order{\log (Qdn)}},
\end{equation}
where we again used Stirling's formula. Note that the dependency on the number of qubits \(Q\) is essentially lost, suggesting that for bounded-degree graphs like the heavy hexagonal with \(d=3\) or 2D grid with \(d=4\), the number of F2Q mappings is not significantly larger (if larger at all) than for the complete graphs with indistinguishable qubits. Note that here, all the qubits are assumed to be indistinguishable, so apart from overestimating the number of trees in Eq.~\eqref{eq:num-trees-bounded-graphs}, we did not account for possible isomorphism between ternary trees. Furthermore, for heavy-hexagonal trees, many of the nodes have a degree of 2, which further decreases the number of different PPTT F2Q mappings.

\section{Simulation data and Results \label{app:molecular data}}

\begin{figure*}[h!]
    \centering
    \includegraphics[width=0.8\textwidth]{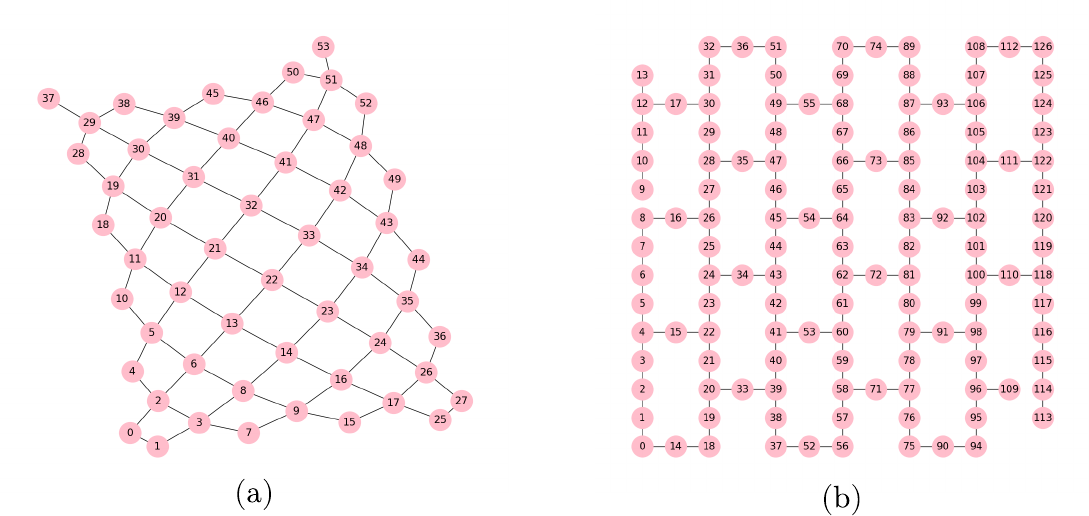}
    \caption{Figure (a) illustrates the topology of the IBM Washington device, and $(b)$ is the Google Sycamore. CNOT gates between qubits are allowed only along the edges of the graphs in these devices.}
    \label{fig:connectivity graphs}
\end{figure*}

\begin{figure*}[h!]    
\centering
    \includegraphics[width=0.95 \linewidth]{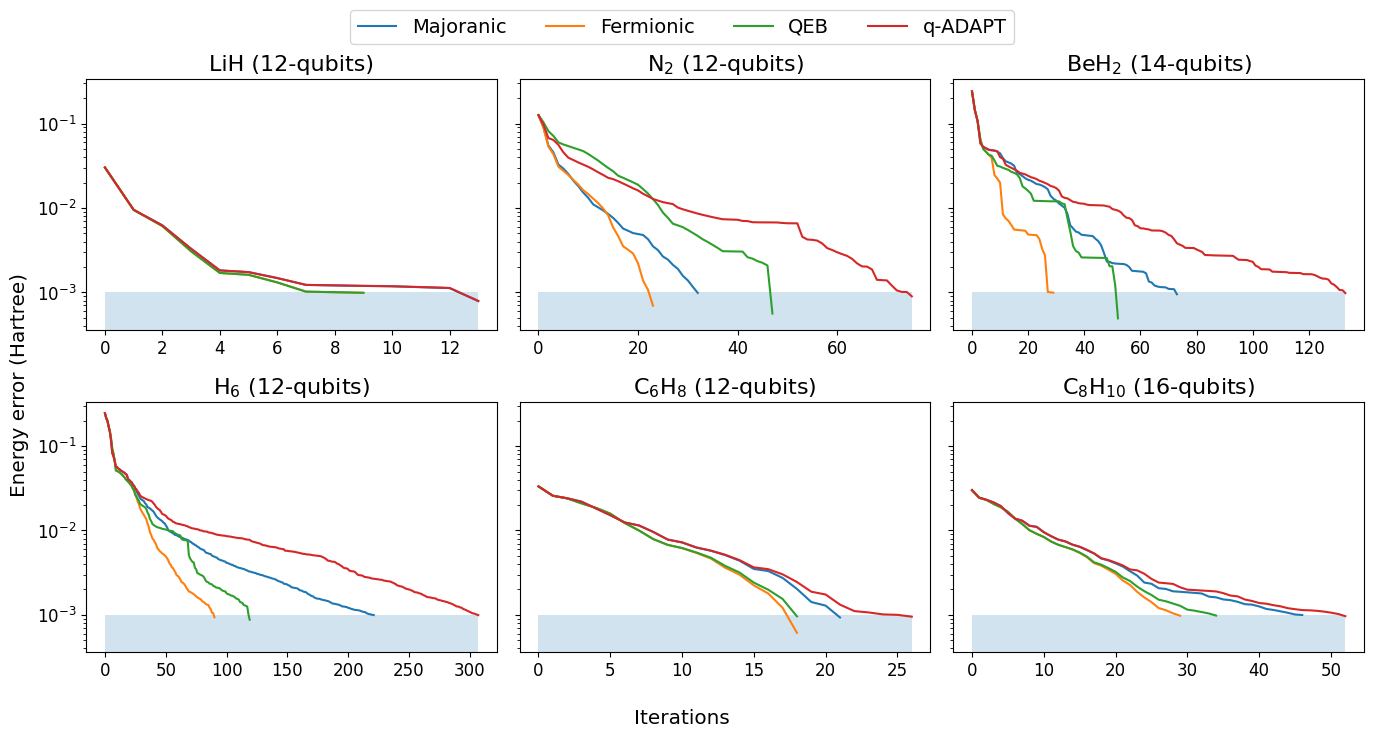}
    \caption{Convergence against ADAPT-VQE iterations to within $10^{-3}$ Hartree energy error of the exact ground-state energy. The number of iterations is equivalent to the number of variational parameters in the ansatz.}
    \label{fig:ADAPT convergence}
\end{figure*}

\begin{table}[h!]
    \begin{ruledtabular}
        \begin{tabular}{llrrrrr}
            Molecule & Pool & Initial & MS TC & MS PC & TC & PC \\
            \midrule
            LiH & Majoranic & 154 & 78 & 78 & 44 & \textbf{40} \\
            N$_2$ & Majoranic & 398 & 250 & 250 & \textbf{194} & \textbf{194} \\
            BeH$_2$ & Majoranic & 962 & 554 & 554 & 440 & \textbf{438} \\
            H$_6$ & Majoranic & 2398 & 2186 & 2186 & 1846 & \textbf{1836} \\
            C$_6$H$_8$ & Majoranic & 274 & 166 & 166 & \textbf{122} & 124 \\
            C$_8$H$_{10}$ & Majoranic & 548 & 422 & 422 & 352 & \textbf{350} \\
            LiH & Fermionic & 158 & 108 & 110 & \textbf{98} & 106 \\
            N$_2$ & Fermionic & 452 & 334 & 342 & \textbf{326} & 354 \\
            BeH$_2$ & Fermionic & 580 & \textbf{356} & \textbf{356} & 378 & 410 \\
            H$_6$ & Fermionic & 1660 & 1452 & 1468 & \textbf{1442} & 1538 \\
            C$_6$H$_8$ & Fermionic & 362 & \textbf{262} & 272 & 276 & 296 \\
            C$_8$H$_{10}$ & Fermionic & 564 & \textbf{436} & 444 & 474 & 502 \\
        \end{tabular}
    \end{ruledtabular}
    \caption{Full connectivity results for various configurations of the treespilation algorithm. 
    Mode Shuffling (MS) refers to a scenario where the mapping tree is fixed as JW, and the assignment of Fermionic modes to qubit modes is optimized. 
    This allows us to leverage the efficient circuit representation of the excitation operators presented in \cite{yordanov2021qubit}. 
    Pauli Cost (CP) and Transpiler Cost (TC) represent the possible cost functions employed. 
    The last two columns labelled PC and TC display the results when full treespilation is applied with the labelled cost function.
    The lowest CNOT counts are displayed in bold. \label{tab: FC results}}
\end{table}

\begin{table}[h!]
    \begin{ruledtabular}
        \begin{tabular}{llrrrrrrr}
        Molecule & Pool & Initial & MS TC & MS PC & CP TC & CP PC & NCP TC & NCP PC \\
        \midrule
        LiH & Majoranic & 198 & 95 & 142 & \textbf{72} & 80 & 80 & 80 \\
        N$_2$ & Majoranic & 519 & 335 & 428 & 274 & 286 & \textbf{264} & 298 \\
        BeH$_2$ & Majoranic & 1438 & 833 & 964 & \textbf{612} & 682 & 640 & 698 \\
        H$_6$ & Majoranic & 3681 & 3334 & 3216 & 2234 & 2352 & \textbf{2216} & 2374 \\
        C$_6$H$_8$ & Majoranic & 356 & 212 & 232 & \textbf{150} & \textbf{150} & 152 & 176 \\
        C$_8$H$_{10}$ & Majoranic & 750 & 587 & 690 & 430 & 458 & \textbf{420} & 466 \\
        LiH & Fermionic & 325 & 186 & 223 & \textbf{148} & \textbf{148} & 174 & 169 \\
        N$_2$ & Fermionic & 1001 & 696 & 852 & \textbf{641} & 706 & 689 & 710 \\
        BeH$_2$ & Fermionic & 1289 & 729 & 1012 & 937 & 870 & 949 & 754 \\
        H$_6$ & Fermionic & 3802 & \textbf{3359} & 3457 & 3386 & 3625 & 3490 & 3389 \\
        C$_6$H$_8$ & Fermionic & 804 & 532 & 642 & 528 & \textbf{523} & 559 & 636 \\
        C$_8$H$_{10}$ & Fermionic & 1227 & \textbf{914} & 1300 & 1201 & 1071 & 1291 & 1080 \\
        \end{tabular}
    \end{ruledtabular}
    \caption{Sycamore connectivity results for various configurations of the treespilation algorithm. 
    Connectivity and Non-connectivity Preserving (CP and NCP) settings, where the algorithm searches inside, and outside the space of connected subtrees of the device, respectively, are displayed. \label{tab: Sycamore results}}
\end{table}

\begin{table}[h!]
    \begin{ruledtabular}
        \begin{tabular}{llrrrrrrr}
        Molecule & Pool & Initial & MS TC & MS PC & CP TC & CP PC & NCP TC & NCP PC \\
        \midrule
        LiH & Majoranic & 271 & 133 & 162 & 104 & 108 & 104 & \textbf{102} \\
        N$_2$ & Majoranic & 665 & 476 & 496 & \textbf{366} & 350 & 368 & 368 \\
        BeH$_2$ & Majoranic & 1917 & 1183 & 1154 & 824 & 862 & 858 & \textbf{830} \\
        H$_6$ & Majoranic & 5261 & 4365 & 3826 & \textbf{2930} & 2956 & 2958 & 2938 \\
        C$_6$H$_8$ & Majoranic & 490 & 278 & 270 & 214 & \textbf{202} & 210 & 210 \\
        C$_8$H$_{10}$ & Majoranic & 943 & 792 & 740 & 566 & 598 & \textbf{566} & 610 \\
        LiH & Fermionic & 463 & 273 & 287 & 235 & \textbf{224} & 257 & 258 \\
        N$_2$ & Fermionic & 1585 & \textbf{1006} & 1149 & 1123 & 1168 & 1128 & 1162 \\
        BeH$_2$ & Fermionic & 2059 & \textbf{1013} & 1206 & 1297 & 1232 & 1175 & 1295 \\
        H$_6$ & Fermionic & 5783 & \textbf{4888} & 5891 & 5514 & 5773 & 5727 & 5432 \\
        C$_6$H$_8$ & Fermionic & 1276 & \textbf{770} & 854 & 827 & 849 & 909 & 924 \\
        C$_8$H$_{10}$ & Fermionic & 1898 & \textbf{1358} & 1403 & 1592 & 1707 & 1660 & 1909 \\
        \end{tabular}
    \end{ruledtabular}
    \caption{Eagle connectivity results for various configurations of the treespilation algorithm \label{tab: Eagle results}}
\end{table}

\begin{table}[h!]
    \caption{Transpilation passes used for different pools. 
    The Efficient Circuits pass (ECP) employs circuit representations of the Fermionic pool in the JW encoding and the QEB pools as introduced in \cite{Yordanov2020}.
    When possible, this pass is used as we found it more efficient than the \textit{TKET} pass that utilizes advanced compilation techniques from \cite{Cowtan2020AGC}.
    This includes the use of ECP for the ansatz produced by MS.
    The Qiskit pass uses the \textit{qiskit} transpiler with optimisation level 3 and is employed in all cases to allow CNOTs cancellation and map the circuit to the device on LC for QEB and qubit pools. 
    The Staircase pass involves the standard approach of compiling exponentiated Pauli strings into a ``staircase'' of CNOTs. 
    The Treespilation pass encompasses novel mapping optimization techniques described in this paper. 
    The Steiner pass is exclusively used to compile strings resulting from the LC Treespilation strategy. 
    Treespilation is not applicable for QEB and qubit pools due to the operators not having an exact Fermionic representation. \label{tab:transpilers}}
    \begin{ruledtabular}
        \begin{tabular}{lllll}
                      & \multicolumn{2}{c}{\textbf{Passes on Full Connectivity}}                                                                                                                        & \multicolumn{2}{c}{\textbf{Passes on Limited Connectivity}}                                                                                                                        \\
        \textit{Pool} & Initial                                                                    & \textit{Final}                                                                                     & \textit{Inital}                                                            & \textit{Final}                                                                                        \\ \hline
        Fermionic     & \begin{tabular}[c]{@{}l@{}}1. ECP  \\ 2. Qiskit \end{tabular}      & \begin{tabular}[c]{@{}l@{}} 1. Treespilation  \\ 2. ECP or TKET  \\ 3. Qiskit \end{tabular}                      & \begin{tabular}[c]{@{}l@{}}1. ECP\\ 2. Qiskit \end{tabular}            & \begin{tabular}[c]{@{}l@{}}1. Treespilation \\ 2. ECP or TKET  \\ 3. Qiskit \end{tabular} \\
        Majoranic     & \begin{tabular}[c]{@{}l@{}}1. Staircase \\ 2. Qiskit \end{tabular} & \begin{tabular}[c]{@{}l@{}}1. Treespilation \\ 2. Staircase \\ 3. Qiskit \end{tabular} & \begin{tabular}[c]{@{}l@{}}1. Staircase \\ 2. Qiskit \end{tabular} & \begin{tabular}[c]{@{}l@{}}1. Treespilation \\ 2. Steiner \\ 3. Qiskit \end{tabular}      \\
        QEB           & \begin{tabular}[c]{@{}l@{}}1. ECP\\ 2. Qiskit \end{tabular}            & \textit{Not applicable}                                                                            & \begin{tabular}[c]{@{}l@{}}1. ECP \\ 2. Qiskit \end{tabular}           & \textit{Not applicable}                                                                                        \\
        Qubit       & \begin{tabular}[c]{@{}l@{}}1. Staircase \\ 2. Qiskit \end{tabular} & \textit{Not applicable }                                                                                    & \begin{tabular}[c]{@{}l@{}}1. Staircase \\ 2. Qiskit \end{tabular} & \textit{Not applicable}                                                                                       
        \end{tabular}
    \end{ruledtabular}
\end{table}

\begin{table}[t]
    \small
    \setlength{\tabcolsep}{4pt} % Reduce column separation
    \begin{ruledtabular}
    \begin{tabular}{ccccc}
        Molecule      & Basis  & \begin{tabular}[c]{@{}c@{}} Active Space\\ (No. Electrons, No. Orbitals) \end{tabular} & \begin{tabular}[c]{@{}c@{}} Groundstate energy\\ (Hartree) \end{tabular} & \begin{tabular}[c]{@{}c@{}} Cartesian Geometry \\ (Angstrom) \end{tabular}                                                              \\  \midrule
        LiH           & STO-3G &   (4, 6)                                                                                 & -8.654854                                                              & \begin{tabular}[c]{@{}l@{}}L 0.000  0.000  0.000\\ H 0.000  0.000  2.000 \\ \end{tabular}  \\ 
        & & & & \\                                                                          
        N$_2$         & STO-3G &   (6, 6)                                                                                & -11.402054                                                              & \begin{tabular}[c]{@{}l@{}}N 0.000  0.000  0.000\\ N 0.000  0.000  1.098 \\ \end{tabular}  \\ 
        & & & & \\                                                                          
        BeH$_2$       & STO-3G &   (6, 7)                                                                                     & -17.006486                                                         & \begin{tabular}[c]{@{}l@{}}H 0.000  0.000  0.000\\ B 0.000  0.000  2.700 \\ H 0.000  0.000  5.400\end{tabular}              \\ 
        & & & & \\                                                                          
        H$_6$         & STO-3G &   (6, 6)                                                                                    & -6.064793                                                           & \begin{tabular}[c]{@{}l@{}}H 0.000  0.000  0.000\\ H 0.000  0.000  1.500 \\ H 0.000  0.000  3.000 \\ H 0.000  0.000  4.500 \\ H 0.000  0.000  6.000 \\ H 0.000  0.000  7.500 \\ \end{tabular}              \\ 
        & & & & \\                                                                          
        C$_6$H$_8$    & cc-pVDZ &   (6, 6)                                                                                    & -5.797646                                                           & \begin{tabular}[c]{@{}l@{}}H 1.488 1.809 0.000 \\ C 0.401 1.860 0.000 \\ C 0.196 3.053 0.000 \\ H 0.375 3.974 0.000 \\ H 1.277 3.142 0.000 \\ C 0.301 0.600 0.000 \\ H 1.389 0.645 0.000 \\ C 0.301 0.600 0.000 \\ H 1.389 0.645 0.000 \\ C 0.401 1.860 0.000 \\ C 0.196 3.053 0.000 \\ H 1.488 1.809 0.000 \\ H 0.375 3.974 0.000 \\ H 1.277 3.142 0.000 \\ \end{tabular}              \\ 
        & & & & \\                                                                          
        C$_8$H$_{10}$ & cc-pVDZ &   (8, 8)                                                                                   & -8.584465                                                            & \begin{tabular}[c]{@{}l@{}}H 1.477 3.061 0.000 \\ C 0.390 3.094 0.000 \\ C 0.226 4.279 0.000 \\ H 0.331 5.208 0.000 \\ H 1.309 4.351 0.000 \\ C 0.336 0.633 0.000 \\ H 1.379 1.849 0.000 \\ C 0.291 1.825 0.000 \\ H 1.425 0.614 0.000 \\ C 0.336 0.633 0.000 \\ C 0.291 1.825 0.000 \\ H 1.425 0.614 0.000 \\ H 1.379 1.849 0.000 \\ C 0.390 3.094 0.000 \\ C 0.226 4.279 0.000 \\ H 1.477 3.061 0.000 \\ H 0.331 5.208 0.000 \\ H 1.309 4.351 0.000 \\ \end{tabular}            
    \end{tabular}
    \end{ruledtabular}
    \caption{Data for the molecules studied in this paper.\label{tab:Molecular data}}
\end{table}

\clearpage
\end{document}